\long\def\ignore#1{\relax}
\newcommand\struto[1][15pt]{{\raise #1 \hbox{\strut}}}%
\newcommand\strutb[1][15pt]{{\raise-#1 \hbox{\strut}}}%
\newcommand\upline{\hline\struto[12pt]}
\newcommand\midline[1][4]{\\[+ #1pt]\hline\struto[#1pt]}
\newcommand\downline[1][12]{\\[+ #1pt]\hline}
\renewcommand\[[1][1]{\par\removelastskip\vskip#1pt\vbox\bgroup\null\hfill$}
\renewcommand\][1][3]{$\hfill\null\egroup\par\removelastskip\vskip#1pt\noindent}
\newenvironment{centre}{\par\vbox\bgroup\null\hfill}{\hfill\null\egroup\par\vskip2pt}
\newcommand{\eqdef}{:=\ }
\newcommand{\recdef}{::=\ }
\newcommand{\Gam}{\Gamma}
\newcommand{\Del}{\Delta}
\newcommand\mathFomega{F_\omega}
\newcommand\Fomega{\ifmmode\mathFomega\else$\mathFomega$\fi}
\newcommand\mathFomegaC{F_\omega^{\mathcal C}}
\newcommand\FomegaC{\ifmmode\mathFomegaC\else$\mathFomegaC$\fi}
\newcommand\mathDNE{\mathrm{DNE}}
\newcommand\DNE{\ifmmode\mathDNE\else$\mathDNE$\fi}
\newcommand{\ie}{i.e.~}
\newcommand{\eg}{e.g.~}
\def\url#1#2{\texttt{#1}}
\newcommand\monthdisplay[1]{}
\newcommand{\multiset}[1]{\{ \!\! \{ #1\} \!\! \} }
\newcommand{\set}[1]{\{ #1\} }
\newcommand{\sep}{\mbox{$\;|\;$}}    
\newcommand{\seqg}[3]{\mbox{$\ {#1}_{#2}^{#3}\ $}}
\newcommand{\seqf}[2][]{\seqg{\vdash}{#1}{#2}}
\newcommand{\seqTh}[2][]{\seqg{\vdash_{\mathcal{T}}}{#1}{#2}}
\newcommand{\seqThD}[2][]{\seqg{\vdash_{\mathcal{T}}^{\mathcal P}}{#1}{#2}} 
\newcommand{\seq}[1][]{\seqf[#1]{}}
\newcommand{\seqt}[1][]{\seqTh[#1]{}}
\newcommand{\seqtD}[1][]{\seqThD[#1]{}}
\newcommand\DerNeg[4]{{#1}   \seqf{#4}   {#2}}
\newcommand\DerPosTh[5][{\mathcal T}]{{#2}   \seqf[#1]{#5}   {[#3]}}
\newcommand\DerNegTh[5][{\mathcal T}]{{#2}   \seqf[#1]{#5}   {#3}}
\newcommand\DerDPLL[2] {{#1}\mbox{;}{#2}\seq}
\newcommand\DerDPLLTh[2] {{#1}\mbox{;}{#2}\seqt}
\newcommand\DerDPLLThD[1] {{#1}\seqtD}  
\newcommand\daggerL{\raise3pt\hbox{\rotatebox{-40}{$\dagger$}}}
\newcommand\daggerR{\raise0pt\hbox{\rotatebox{40}{$\dagger$}}}
\newcommand\andP{{\wedge^+}}
\newcommand\andN{{\wedge^-}}
\newcommand\orP{{\vee^+}}
\newcommand\orN{{\vee^-}}
\newcommand\FA[2]{\forall #1 #2}
\newcommand{\non}[1]{{#1}^{\perp}}
\newcommand\Theory[2][\mathcal T]{{#1}(#2)}
\newcommand\LKDPLLc{\textsf{LK}${}^c_{\textsf{DPLL}}$}
\newcommand\LKDPLLTh[1][\mathcal T]{\textsf{LK}${}_{\textsf{DPLL}}$($#1$)}
\newcommand\LKDPLLThc[1][\mathcal T]{\textsf{LK}${}^c_{\textsf{DPLL}}$($#1$)}
\newcommand\LKDPLLThp[1][\mathcal T]{\textsf{LK}${}_{\textsf{DPLL}^{+}}$($#1$)}
\newcommand\LKp{\textsf{LK}$^p$}
\newcommand\LKThp[1][\mathcal T]{\textsf{LK}$^p$($#1$)}
\newcommand\size[1]{|#1|}
\newcommand\emptyenv{()}
\newcommand\backpoints[1]{[#1]}
\newcommand\backstrict[1]{\llbracket #1\rrbracket}
\newcommand\forget[1]{|#1|}
\newcommand\unsat{\textsf{UNSAT}}
\newcommand\DPLL{\textsf{DPLL}}
\newcommand\DPLLTh{\textsf{DPLL}($\mathcal T$)}
\newcommand\DPLLbjTh{\textsf{DPLL}$_{bj}$($\mathcal T$)}
\newcommand\mSat[1]{\textsf{Sat} (#1)}
\newcommand\atm[1]{\textsf{lit} (#1)}
\newcommand\nSat[2]{\textsf{Sat}_{#1}{(#2)}}
\begin{document}
\title{Two simulations about \DPLLTh}

\author{Mahfuza Farooque${}^1$, Stéphane Lengrand${}^{1,2}$ and Assia Mahboubi${}^3$\\[15pt]
  ${}^1$ CNRS\\
  ${}^2$ Ecole Polytechnique\\
  ${}^3$ Microsoft Research - INRIA Joint Centre\\[15pt]
  Project PSI: ``Proof Search control in Interaction with domain-specific methods''\\
  ANR-09-JCJC-0006
}

\maketitle
\abstract{
  In this paper we relate different formulations of the \DPLLTh\ procedure.

  The first formulation is that of~\cite{Nieuwenhuis06} based on a system of rewrite rules, which we denote \DPLLTh.

  The second formulation is an inference system of~\cite{Tin-JELIA-02}, which we denote \LKDPLLTh.

  The third formulation is the application of a standard proof-search mechanism in a sequent calculus \LKThp\ introduced here.

  We formalise an encoding from \DPLLTh\ to \LKDPLLTh\ that was, to our knowledge, never explicitly given and, in the case where \DPLLTh\ is extended with backjumping and Lemma learning, never even implicitly given. 

  We also formalise an encoding from \LKDPLLTh\ to \LKThp, building on Ivan Gazeau's previous work: we extend his work in that we handle the ``-modulo-Theory'' aspect of SAT-modulo-theory, by extending the sequent calculus to allow calls to a theory solver (seen as a blackbox). We also extend his work in that we handle advanced features of DPLL such as backjumping and Lemma learning, etc.

Finally, we refine the approach by starting to formalise quantitative aspects of the simulations: the complexity is preserved (nunber of steps to build complete proofs). Other aspects remain to be formalised (non-determinism of the search / width of search space).
}

\tableofcontents

\newpage
\section {Encoding DPLL($\mathcal T$) in \LKDPLLTh}
In this section we encode \DPLLTh\ in \LKDPLLTh.

Note that there exist different variants of \DPLLTh. We first consider the basic version which is equipped with backtracking.
This formalises ideas presented in~\cite{Tin-JELIA-02}.

Then we enhance the encoding to the enhanced version of \DPLLTh\ with backjumping, a generalised version of backtracking.

The main gap between \DPLLTh\ and an inference system such as \LKDPLLTh\ is the fact that a (successful) \DPLLTh\ run is a rewrite sequence finishing with the state \unsat, while a (successful) proof-search run is (/ produces) a proof tree. Roughly speaking, the \DPLLTh\ procedure implements the depth-first search of the corresponding tree.

\subsection{Preliminaries: \LKDPLLTh\ and its properties}

%
%

\begin{definition}[The system \LKDPLLTh]
  \emph{Clauses} are finite disjunctions of literals considered up to commutativity and associativity. We will denote them $ C, C_0, C_1$ etc; the empty clause will be denoted by $\bot$. The cardinality of a clause $C$ is denoted $\size C$.

  Finite sets of clauses, \eg $\{C_1,\ldots,C_n \}$, will be denoted $\phi, \phi_0,$ etc. By $\size \phi$ we denote the sum of the sizes of the clauses in $\phi$.  By $\atm \phi$ we denote the set of literals that appear in $\phi$ or whose negations appear in $\phi$.

  Given a theory $\mathcal{T}$ the system \LKDPLLTh, given in Figure~\ref{fig:lkdpllth}, is an inference system on sequents of the form $\DerDPLLTh{\Delta}{\phi}$, where $\Delta$ is a set of literals (\eg $\{l_1,\ldots,l_n \}$).
\end{definition}

\begin{figure}[h!]
  \[\begin{array}{|c|}
    \upline
    \infer[Split $ 
    where $l\in\atm\phi$,
    $ \Del, \non l \nvDash_\mathcal{T} $
    and $ \Del,l \nvDash_\mathcal{T}] 
    {\DerDPLLTh {\Del} {\phi}}
    {
      {\DerDPLLTh {\Del,\non l} {\phi}}  
      \qquad 
      {\DerDPLLTh{\Del, l} {\phi}}
    } \\\\
    \infer[Empty] {\DerDPLLTh {\Del} {\phi, \bot }} {}
    \qquad
    \infer[Assert $ where $ \Del, \non l \nvDash_\mathcal{T} $ and $ \Del,l \nvDash_\mathcal{T}]
    {\DerDPLLTh {\Del} {\phi, l }} {\DerDPLLTh {\Del,l} {\phi, l }}\\\\
    \infer[Subsume $ where $   \Del,\non l \models_\mathcal{T} ]{\DerDPLLTh {\Del} { \phi,l \vee C}}
    {\DerDPLLTh {\Del}{ \phi} }
    \qquad	
    \infer[Resolve $ where $   \Del,l \models_\mathcal{T} ]{\DerDPLLTh {\Del} { \phi,l \vee C}}
    {\DerDPLLTh {\Del}{ \phi,C} }

    \downline
  \end{array} \]
  \caption{System \LKDPLLTh\ }
  \label{fig:lkdpllth}
\end{figure}


The $Assert$  rule models the fact that every literal occurring as a unit clause in the current clause set must be satisfied for the whole clause set to be satisfied. The $Split$ is mainly used to branch the proof tree from the \DPLL\ rewrite sequence system.This rule corresponds to the decomposition in smaller subproblems of the \DPLL\ method. This rule is the only $don't$ $know$ $non-deterministic$ rule of the calculus.
The $Resolve$ rule removes from a clause all literals whose complement has been asserted (which corresponds to generating the simplified clause by unit resolution and the discarding the clause by backword subsumption). The $Subsume$ rule removes from the clauses that contain an asserted literal( because all of these clause will be satisfied in any model in which the asserted literal is true). To close the branch of a proof tree we use the $empty$ rule is in the calculus just for convenience and could be removed with no loss of completeness. It models the fact that a derivation can be terminated as soon as the empty clause is derived. We do not consider that the model is consistent and satisfiable. 

\begin{definition}[Semantical entailment]
  $\Delta \models _\mathcal {T } C$ is a semantical notion of entailment for a particular theory $\mathcal{T}$, \ie every $\mathcal{T}$-model of $\Delta$ is a $\mathcal{T}$-model of $C$. A theory lemma is a clause $C$ such that $\emptyset \models _\mathcal T C$.
\end{definition}

\begin{lemma}[Weakening 1]
  \label{lwt1}
  The following rule is size-preserving admissible in \LKDPLLTh
  \[
  \iinfer
  {\DerDPLLTh{\Delta}{\phi,C}}
  {\DerDPLLTh{\Delta}{\phi}}
  \]
\end{lemma}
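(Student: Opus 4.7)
The plan is to proceed by straightforward induction on the height of the given derivation of $\DerDPLLTh{\Delta}{\phi}$, performing a case analysis on the last rule applied. The key observation is that adding an arbitrary clause $C$ on the right never shrinks $\atm{\cdot}$ and never disturbs the theory-side conditions of any rule, since each such condition involves only $\Delta$ and possibly a specific literal $l$, but not $\phi$ itself. So every rule application can be mimicked step for step, on the enriched right-hand side $\phi, C$.

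First I would handle the base case of the \textit{Empty} rule: if the derivation is the one-step conclusion $\DerDPLLTh{\Del}{\phi', \bot}$, then $\phi', \bot, C$ still contains $\bot$, and a fresh application of \textit{Empty} closes the branch. Then I would treat each inductive case in turn. For \textit{Split} on a literal $l$, apply the induction hypothesis to both premises (which share the same clause set $\phi$) to obtain derivations over $\phi, C$, then re-apply \textit{Split} on the same $l$; the side condition $l \in \atm{\phi}$ lifts to $l \in \atm{\phi, C}$ because $\atm{\phi} \subseteq \atm{\phi, C}$, and the two entailment side conditions depend only on $\Delta$ and $l$ and so transfer verbatim. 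For \textit{Assert} with distinguished unit clause $l$, the IH applied to the premise $\DerDPLLTh{\Del, l}{\phi', l}$ yields $\DerDPLLTh{\Del, l}{\phi', l, C}$, and the rule re-applies. The two remaining cases \textit{Subsume} and \textit{Resolve} are analogous: the IH inserts $C$ into the right-hand side of the unique premise, and the rule is then re-applied on the same distinguished clause $l \vee C'$, which is still present in $\phi, C$; again both theory side conditions mention only $\Delta$ and $l$.

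Because in every case we reapply exactly the same rule once, to the derivation(s) produced by the induction hypothesis, the height (and the total number of rule applications) of the resulting derivation equals that of the original; hence the admissibility is size-preserving, as claimed. There is no real obstacle in this proof: it is a routine verification that none of the side conditions of \LKDPLLTh{} inspects $\phi$ in a way that could be invalidated by the insertion of an extra clause $C$.
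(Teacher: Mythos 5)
Your proof is correct and follows exactly the route the paper takes: the paper's own proof is just the one-line ``By induction on $\DerDPLLTh{\Delta}{\phi}$,'' and your case analysis (checking that the extra clause $C$ preserves each side condition, with the only $\phi$-dependent one, $l\in\atm\phi$, handled by monotonicity of $\atm{\cdot}$, and that reapplying the same rule keeps the size unchanged) is precisely the routine verification the authors leave implicit. No gaps.
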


\begin{proof}
  By induction on $\DerDPLLTh{\Delta}{\phi}$.
\end{proof}

\begin{definition}[Consequences]
  For every set $\Del$ of literals $l$, let $\mSat \Del = \{ l | \Del \models_\mathcal{T} l \}$ and $\nSat \phi \Del = \mSat \Del \cap \atm \phi $.
\end{definition}

\begin{remark}
  If $\mSat \Del = \mSat {\Del'}$ then $\Delta \models_\mathcal{T} l$ iff $\Delta' \models_\mathcal{T} l$
\end{remark}

\begin{lemma}[Weakening 2]
  \label{sat2}
  The following rule is size-preserving admissible in \LKDPLLTh
  \[
  \iinfer[\nSat{\phi} \Del \subseteq \nSat {\phi} {\Del'}]
  {\DerDPLLTh{\Delta'}{\phi}}
  {\DerDPLLTh{\Delta}{\phi}}
  \]
\end{lemma}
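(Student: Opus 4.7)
The plan is to proceed by structural induction on the derivation $\DerDPLLTh{\Delta}{\phi}$, case-analysing on the last rule applied. The guiding observation at the outset is that every side condition in the four rules of $\LKDPLLTh$ depends on $\Delta$ only through which literals of $\atm{\phi}$ are (or are not) entailed by $\Delta$; equivalently, only through $\nSat{\phi}{\Delta}$. This is already enough to transfer \emph{positive} entailment conditions from $\Delta$ to $\Delta'$ via the hypothesis $\nSat{\phi}{\Delta} \subseteq \nSat{\phi}{\Delta'}$. Size preservation will be automatic: each case will either reuse the same rule on $\Delta'$, or drop to one of the premises.

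The easy cases are $Empty$ (the rule applies unchanged, as $\phi$ already contains $\bot$) and $Resolve$, $Subsume$: their side conditions are positive entailments $\Delta,l \models_\mathcal{T}$ or $\Delta,\non l \models_\mathcal{T}$ on a literal $l\in\atm{\phi}$, so they transfer to $\Delta'$ by the hypothesis; the IH applies to the unique premise (in which $\Delta$ is unchanged and the set of literals of the clause set has only shrunk, preserving the $\nSat{\phi}{\cdot}$ inclusion), and we rebuild the same rule above $\Delta'$.

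The interesting cases are $Split$ and $Assert$, whose side conditions state that $l$ is \emph{not yet determined} by $\Delta$, and whose premises extend $\Delta$ with $l$ (and, for $Split$, also $\non l$). Here I would split into two subcases. (a) If $l$ is also undetermined under $\Delta'$, I apply the same rule with $\Delta'$, appealing to the IH on each premise. This requires an auxiliary step: for $l\in\atm{\phi}$, the inclusion $\nSat{\phi}{\Delta}\subseteq\nSat{\phi}{\Delta'}$ lifts to $\nSat{\phi}{\Delta,l}\subseteq\nSat{\phi}{\Delta',l}$, proved semantically by taking a $\mathcal{T}$-model of $\Delta',l$ and using that any literal of $\atm{\phi}$ forced by $\Delta,l$ is forced by $\Delta',l$ (the degenerate case where $\Delta\cup\{l\}$ is inconsistent is handled separately by noting $\non l\in\nSat{\phi}{\Delta}\subseteq\nSat{\phi}{\Delta'}$, making $\Delta'\cup\{l\}$ inconsistent as well). (b) If $\Delta'$ already determines $l$, say $\Delta'\models_\mathcal{T} l$, then the $\non l$-branch is no longer available; I discard it and apply IH only to the $l$-branch to obtain $\DerDPLLTh{\Delta',l}{\phi}$. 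Since $\Delta'\models_\mathcal{T} l$ gives $\mSat{\Delta',l}=\mSat{\Delta'}$, the remark following the definition of $\nSat{\phi}{\cdot}$ applies uniformly at every node of this sub-derivation, so a purely notational relabelling replaces $\Delta',l$ by $\Delta'$ throughout, yielding $\DerDPLLTh{\Delta'}{\phi}$. The symmetric case $\Delta'\models_\mathcal{T}\non l$ is handled the same way. Size is preserved because we keep the same tree in subcase (a), and strictly shrink in subcase (b).

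The main obstacle I expect is the auxiliary lifting step in (a): one must be careful that the hypothesis, which only controls literals in $\atm{\phi}$, is strong enough when the theory can involve literals outside $\atm{\phi}$. The argument relies crucially on the fact that the literal $l$ freshly added by $Split$/$Assert$ lies in $\atm{\phi}$, so that the consistency/inconsistency of $\Delta'\cup\{l\}$ is itself witnessed by $\nSat{\phi}{\Delta'}$; this is what makes the two subcases (a) and (b) exhaustive and is the pivot of the whole induction.
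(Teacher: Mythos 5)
Your overall plan is the paper's plan: induction on the derivation, transferring the side conditions through the inclusion $\nSat{\phi}{\Delta}\subseteq\nSat{\phi}{\Delta'}$, with the $Split$ and $Assert$ cases split according to whether $\Delta'$ already determines $l$. There is, however, one concrete gap: the $Assert$ case when $\Delta'\models_\mathcal{T}\non l$. Your recipe for subcase (b) is ``discard the branch that is no longer available and keep the other'', and you declare the case $\Delta'\models_\mathcal{T}\non l$ symmetric. It is not: $Assert$ has a \emph{single} premise, $\DerDPLLTh{\Delta,l}{\phi,l}$, so when $\Delta'$ entails $\non l$ there is no surviving branch to keep. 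The induction hypothesis on that premise only yields a derivation in the context $\Delta',l$, which is now $\mathcal T$-inconsistent; your relabelling step then fails because $\mSat{\Delta',l}\neq\mSat{\Delta'}$, and you cannot re-apply $Assert$ either, since its side condition $\Delta',l\nvDash_\mathcal{T}$ is violated. The paper closes this case by abandoning the premise altogether and building a fresh two-node derivation: $Resolve$ applied to the unit clause $l$ (its side condition is exactly $\Delta',l\models_\mathcal{T}$) turns it into $\bot$, and $Empty$ closes. You need this, or an equivalent direct construction, to complete the case.

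Two smaller remarks. In $Split$ subcase (b) the paper is more direct than you are: from $\mSat{\Delta'}=\mSat{\Delta',l}$ one gets $\nSat{\phi}{\Delta,l}\subseteq\nSat{\phi}{\Delta'}$ and the induction hypothesis can be invoked with target $\Delta'$ itself, yielding $\DerDPLLTh{\Delta'}{\phi}$ in one step; your detour through $\DerDPLLTh{\Delta',l}{\phi}$ followed by a ``purely notational relabelling'' is in effect a second application of the very weakening being proved, and is best folded back into the induction hypothesis. Finally, your worry about lifting the inclusion from $\Delta,\Delta'$ to $\Delta{,}l$ and $\Delta'{,}l$ is legitimate, but the justification you sketch merely restates the claim; the paper asserts the same implication without proof, so this is a shared hand-wave rather than a divergence.
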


\begin{proof}
  By induction on the derivation of $\DerDPLLTh{\Delta}{\phi}$: 
  \begin{itemize}
  \item[$Resolve$]
    $\infer[ \Del,l \models_\mathcal{T} ]{\DerDPLLTh {\Del} { \phi,l \vee C}}
    {\DerDPLLTh {\Del}{ \phi,C} }$
    
    We assume  $\nSat {\phi, l\vee C} \Del \subseteq \nSat {\phi, l\vee C} {\Del'}$\\ from which we get  $\nSat {\phi,C} {\Del} \subseteq \nSat {\phi,C} {\Del'}$, so we can apply the induction hypothesis to construct
    $$\infer[ \Del',l \models_\mathcal{T} ]{\DerDPLLTh {\Del'} { \phi,l \vee C}}
    {\DerDPLLTh {\Del'}{ \phi,C} }$$
    The side-condition is a consequence of the assumption $\nSat {\phi, l\vee C} \Del \subseteq \nSat {\phi, l\vee C} {\Del'}$.
  \item[$Subsume$]
    \infer[ \Del,\non l \models_\mathcal{T} ]{\DerDPLLTh {\Del} { \phi,l \vee C}}
    {\DerDPLLTh {\Del}{ \phi} }		
    
    We assume  $\nSat {\phi, l\vee C} \Del \subseteq \nSat {\phi, l\vee C} {\Del'}$ \\from which we get  $\nSat {\phi} {\Del} \subseteq \nSat {\phi} {\Del'}$, so we can apply the induction hypothesis to construct
    $$\infer[ \Del',\non l \models_\mathcal{T} ]
    {\DerDPLLTh {\Del'} { \phi,l \vee C}}
    {\DerDPLLTh {\Del'}{ \phi} }		$$
    The side-condition is a consequence of the assumption $\nSat {\phi,l\vee C} \Del \subseteq \nSat {\phi, l\vee C} {\Del'}$.

  \item [$Assert$] 
    $ \infer[ \Del, \non l \nvDash_\mathcal{T} $ and $ \Del,l \nvDash_\mathcal{T}]
    {\DerDPLLTh {\Del} {\phi, l }}
    {\DerDPLLTh {\Del,l} {\phi, l }} $

    We assume  $\nSat {\phi, l} \Del \subseteq \nSat {\phi, l} {\Del'}$ \\from which we get  $\nSat {\phi,l} {\Del,l} \subseteq \nSat {\phi,l} {\Del',l}$.
    \begin{itemize}
    \item If $\Del' \models_\mathcal{T} l$, then $\mSat {\Del',l} = \mSat{\Del'} $, so we have $\nSat {\phi,l} {\Del,l} \subseteq \nSat {\phi,l} {\Del'}$. The induction hypothesis then gives ${\DerDPLLTh {\Del'} {\phi, l }}$.

    \item If $\Del' \models_\mathcal{T} \non l$, then we construct $$\infer[Resolve]{\DerDPLLTh{\Del'} {\phi,l}}  {\infer[Empty]{\DerDPLLTh{\Del'} {\phi,\bot}} {}} $$

    \item If $\Del' \not\models_\mathcal{T} l$ and $\Del' \not\models_\mathcal{T} \non l$: we first apply the induction hypothesis to get 
      $\DerDPLLTh {\Del',l} {\phi, l }$ and we conclude by constructing
      $$  \infer[ \Del', \non l \nvDash_\mathcal{T} $ and $ \Del',l \nvDash_\mathcal{T}]
      {\DerDPLLTh {\Del'} {\phi, l }} 
      {\DerDPLLTh {\Del',l} {\phi, l }} $$
    \end{itemize}

  \item [$Split$] 

    $\infer[ \Del, \non l \nvDash_\mathcal{T} $ and $ \Del,l \nvDash_\mathcal{T}] {\DerDPLLTh {\Del} {\phi, l \vee C}}
    {{\DerDPLLTh {\Del,\non l} {\phi, l \vee C}}  \quad 
      {\DerDPLLTh{\Del, l} {\phi, l \vee C}}} $

    We assume  $\nSat {\phi, l \vee C} \Del \subseteq \nSat {\phi, l \vee C} {\Del'}$ from which we get both\\ $\nSat {\phi,l \vee C} {\Del,l} \subseteq \nSat {\phi,l \vee C} {\Del',l}$ and  $\nSat {\phi,l \vee C} {\Del,\non l} \subseteq \nSat {\phi,l \vee C} {\Del',\non l}$.

    \begin{itemize}
    \item If $\Del ' \models_\mathcal{T} l$, then $\mSat {\Del'} = \mSat {\Del',l}$, so we have $\nSat {\phi,l \vee C} {\Del, l} \subseteq \nSat {\phi,l \vee C} {\Del'}$. The induction hypothesis then gives $\DerDPLLTh {\Del'} {\phi, l \vee C} $.
    \item If $\Del ' \models_\mathcal{T} \non l$, then $\mSat {\Del'} = \mSat {\Del',\non l}$, so we have $\nSat {\phi,l \vee C} {\Del, \non l} \subseteq \nSat {\phi,l \vee C} {\Del'}$. The induction hypothesis then gives $\DerDPLLTh {\Del'} {\phi, l \vee C} $.
    \item If $\Del ' \not \models_\mathcal{T} l$ and $\Del ' \not\models_\mathcal{T} \non l$: the induction hypothesis on both premises gives ${\DerDPLLTh {\Del',l} {\phi, l \vee C }}$ and ${\DerDPLLTh {\Del',\non l} {\phi, l \vee C }}$, and we can conclude
      $$\infer[\Del ' \not \models_\mathcal{T} l \mbox{ and }\Del ' \not\models_\mathcal{T} \non l] {\DerDPLLTh {\Del'} {\phi, l \vee C}}
      {{\DerDPLLTh {\Del',\non l} {\phi, l \vee C}}  \quad 
	{\DerDPLLTh{\Del', l} {\phi, l \vee C}}} $$
    \end{itemize}
  \item [$Empty$]
    Straightforward. 
  \end{itemize}
\end{proof}

\begin{lemma}[Invertibility of Resolve]
  $Resolve$ is size-preserving invertible in \LKDPLLTh.
\end{lemma}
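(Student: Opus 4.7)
The plan is to proceed by induction on a derivation $\pi$ of $\DerDPLLTh{\Del}{\phi, l \vee C}$, keeping the side-condition of $Resolve$ (namely $\Del, l \models_\mathcal{T}$) as an extra hypothesis, and to build a derivation of $\DerDPLLTh{\Del}{\phi, C}$ whose size does not exceed that of $\pi$. I case-analyse on the last rule of $\pi$.

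The routine cases are those where the rule does not act on $l \vee C$ itself. When it acts on a clause of $\phi$ (covering $Empty$ reading $\bot$ from $\phi$, and $Assert$, $Subsume$, or $Resolve$ targeting a clause of $\phi$), I apply the induction hypothesis to the premises---the hypothesis $\Del, l \models_\mathcal{T}$ is inherited by any extension of $\Del$ used in a premise---and reapply the same rule with its side-conditions unchanged. The $Split$ case is analogous: I apply the induction hypothesis to both premises and reapply $Split$.

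The interesting cases are those where the last rule acts on $l \vee C$ itself. The $Empty$ case is impossible since $l \vee C$ is not the empty clause. The $Assert$ case would force $l \vee C$ to be the unit clause $l$ (so $C = \bot$) and would come with side-condition $\Del, l \nvDash_\mathcal{T}$; this flatly contradicts our hypothesis $\Del, l \models_\mathcal{T}$, so the case is vacuous. For $Subsume$, the rule picks some $l' \in l \vee C$ with $\Del, \non{l'} \models_\mathcal{T}$ and derives the conclusion from $\DerDPLLTh{\Del}{\phi}$: if $l' = l$, I obtain $\DerDPLLTh{\Del}{\phi, C}$ by a single application of Weakening~1 (Lemma~\ref{lwt1}) to that premise; otherwise $l'$ lies in $C$, and I simply reapply $Subsume$ on $l'$ from the same premise. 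For $Resolve$, the rule picks some $l' \in l \vee C$ with $\Del, l' \models_\mathcal{T}$ and premise $\DerDPLLTh{\Del}{\phi, (l \vee C) \setminus l'}$: if $l' = l$, that premise is already $\DerDPLLTh{\Del}{\phi, C}$; otherwise $l' \in C$, the premise reads $\DerDPLLTh{\Del}{\phi, l \vee (C \setminus l')}$, the induction hypothesis transforms it into $\DerDPLLTh{\Del}{\phi, C \setminus l'}$, and one more application of $Resolve$ on $l'$ concludes.

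The main subtlety is the $Assert$-on-$l\vee C$ case, which looks as though it might force an awkward reshuffling of the derivation but is in fact ruled out outright by the incompatibility of its side-conditions with $\Del, l \models_\mathcal{T}$; spotting this is what keeps the proof elementary. A secondary point worth checking is that size is genuinely preserved: every inductive case invokes the induction hypothesis on a strictly smaller derivation and then adds at most one rule matching the original, while in the $Resolve$-on-$l \vee C$ with $l' = l$ and $Subsume$-on-$l\vee C$ with $l' = l$ cases the size strictly decreases (the latter relying on size-preservation of Weakening~1).
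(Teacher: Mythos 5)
Your proof is correct and follows essentially the same route as the paper's: induction on the derivation, permuting $Resolve$ upward through each rule, with the key observation that a rule instance whose side-conditions include $\Del,l\nvDash_\mathcal{T}$ (an $Assert$ on the unit clause $l$) is ruled out by the hypothesis $\Del,l\models_\mathcal{T}$. The one detail you gloss over is that reapplying $Split$ on $l'$ needs $l'\in\atm{\phi,C}$ rather than merely $l'\in\atm{\phi,C\vee l}$; this holds because $l'\neq l$ and $l'\neq\non{l}$ by the same side-condition clash you invoke for $Assert$, a point the paper makes explicit.
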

\begin{proof}
  By induction on the derivation of $\DerDPLLTh{\Delta}{\phi,C\vee l}$ we prove $\DerDPLLTh{\Delta}{\phi,C}$ (with the assumption $\Del,l\models_{\mathcal T}$): 
  \begin{itemize}
  \item [$Resolve$] easily permutes with other instances of $Resolve$ and with instances of $Subsume$.
  \item [$Assert$] 
    The side-condition of the rule guarantees that the literal added to the model, say $l'$, is different from $l$:
    $$ \infer[ \Del, \non {l'} \nvDash_\mathcal{T} $ and $ \Del,l' \nvDash_\mathcal{T}]
    {\DerDPLLTh {\Del} {\phi',l',C\vee l}}
    {\DerDPLLTh {\Del,l'} {\phi',l',C\vee l}} $$

    We can construct
    $$ \infer[ \Del, \non {l'} \nvDash_\mathcal{T} $ and $ \Del,l' \nvDash_\mathcal{T}]
    {\DerDPLLTh {\Del} {\phi',l',C}}
    {\DerDPLLTh {\Del,l'} {\phi',l',C}} $$
    whose premiss is proved by the induction hypothesis.

  \item [$Split$] 
    $\infer[ l'\in\atm{\phi,C\vee l}\mbox{ and }\Del, \non {l'} \nvDash_\mathcal{T} $ and $ \Del,l' \nvDash_\mathcal{T}] 
    {\DerDPLLTh {\Del} {\phi,C\vee l}}
    {
      {\DerDPLLTh {\Del,\non {l'}} {\phi,C\vee l}}  
      \quad 
      {\DerDPLLTh{\Del, l'} {\phi,C\vee l}}
    } $

    We can construct
    $$
    \infer[l'\in\atm{\phi,C}\mbox{ and }\Del, \non {l'} \nvDash_\mathcal{T} \mbox{ and } \Del,l' \nvDash_\mathcal{T}]
    {\DerDPLLTh {\Del} {{\phi,C}}}{
      {\DerDPLLTh {\Del,l'} {{\phi,C}}}
      \quad
      {\DerDPLLTh {\Del,\non {l'}} {{\phi,C}}}
    }
    $$
    whose branches are closed by using the induction hypothesis. The side-condition $l'\in\atm{\phi,C}$ is satisfied because $l\neq l'$.
  \item [$Empty$]
    Straightforward. 
  \end{itemize}
\end{proof}


We now introduce a new system  \LKDPLLThp\  which is an extended version of \LKDPLLTh\ with $Weakening 1$, $Weakening2$ and the \textit{Inverted Resolve}. By the previous lemmas, a sequent derivable in \LKDPLLThp\ is derivable in \LKDPLLTh.

\begin{figure}[h!]
  \[\begin{array}{|c|}
    \upline
    

    \iinfer {\DerDPLLTh {\Del} {\phi, C }} {\DerDPLLTh {\Del} {\phi }}
    \qquad
    
    \iinfer[\nSat{\phi} \Del \subseteq \nSat {\phi} {\Del'}  ]
    {\DerDPLLTh{\Delta'}{\phi}} {\DerDPLLTh{\Delta}{\phi}} 
    
    \qquad
    
    \iinfer[\Del,l \models_\mathcal{T} ]{\DerDPLLTh {\Del}{ \phi,C} }
    {\DerDPLLTh {\Del} { \phi,l \vee C}}
    \downline
  \end{array} \]
  \caption{System \LKDPLLThp\ }
  \label{fig:lkdpllthp}
\end{figure}

\begin{definition}[Size of proof-trees in \LKDPLLThp]
  The size of proof-trees in \LKDPLLThp\ is defined as the size of trees in the usual sense, but not counting the occurences of \emph{Weakening1}, \emph{Weakening2} or the \emph{Inverted Resolve} rules.\footnote{For that reason, dashed lines will be used for the occurences of those inference rules.}
\end{definition}

\begin{remark}
  The size-preserving admissibility results of those three rules in \LKDPLLTh\ entails that a proof-tree in \LKDPLLThp\ of size $n$, can be transformed into a proof-tree in \LKDPLLTh\ of size at most $n$.  
\end{remark}

\begin{lemma}
  \label{lgt}
  If $\Delta \models_{\mathcal{T}}  \neg C$ then there is a proof-tree concluding $\DerDPLLTh\Delta{{C,\phi}}$ of size at most $\size \phi+1$.
\end{lemma}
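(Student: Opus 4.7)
My plan is to induct on $\size C$: I will repeatedly use $Resolve$ to strip off one literal of $C$ at a time until $C$ is reduced to $\bot$, then close the branch with $Empty$. None of $Split$, $Assert$, $Subsume$, or the three ``free'' rules ($Weakening1$, $Weakening2$, inverted $Resolve$) will be needed, because the hypothesis $\Delta \models_\mathcal{T} \neg C$ directly supplies the $Resolve$ side condition on every literal of $C$.

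In the base case $C = \bot$ and $\DerDPLLTh\Delta{\bot,\phi}$ is an immediate instance of $Empty$, giving a proof-tree of size $1$. For the inductive step, write $C = l \vee C'$; unfolding $\Delta \models_\mathcal{T} \neg(l \vee C')$ yields both $\Delta, l \models_\mathcal{T}$ (exactly the side condition of $Resolve$ at literal $l$) and $\Delta \models_\mathcal{T} \neg C'$ (the hypothesis needed for the recursive call). The induction hypothesis produces a proof-tree of $\DerDPLLTh\Delta{C',\phi}$, and one $Resolve$ step on top extends it to a proof-tree of $\DerDPLLTh\Delta{l \vee C',\phi} = \DerDPLLTh\Delta{C,\phi}$. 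Counting only $Resolve$ and $Empty$, the total size is $\size C + 1$.

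I do not foresee any substantive obstacle: the argument is a clean induction, and each required side condition is immediate from $\Delta \models_\mathcal{T} \neg C$. The one point worth flagging is the stated bound: my construction yields $\size C + 1$, independent of $\phi$, rather than $\size \phi + 1$. Most likely $\size C + 1$ is the intended bound; alternatively, if in the intended use $C$ is one of the clauses of $\phi$ --- e.g.\ a theory lemma being introduced into the current clause set --- then $\size C \le \size \phi$ and the discrepancy is absorbed automatically.
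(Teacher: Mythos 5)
Your proof is correct and is essentially the paper's own: the paper likewise observes that $\Delta\models_{\mathcal T}\neg C$ gives $\Delta\models_{\mathcal T}\non{l_i}$ for every literal $l_i$ of $C$, strips them off with repeated $Resolve$ steps (written as a single multi-step inference rather than your explicit induction on $\size C$), and closes with $Empty$. Your remark about the bound is also apt: the construction yields size $\size C+1$ rather than the stated $\size\phi+1$, and the discrepancy is harmless in practice since the theorem invoking this lemma only needs the bound $\size{\phi,C}+1$, which dominates both.
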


\begin{proof}
  Here $\Delta \models_{\mathcal{T}} \neg C$ means $C= l_1 \vee\ldots\vee l_n$ and for all $l_i$, $\FA l_i$ $\Delta \models_{\mathcal{T}}  \non l_i$ where i=1,\ldots,n.\\
  We can therefore construct 
  $$
  \Infer[\mbox{Resolve}]{\DerDPLLTh\Delta{{C,\phi}} }
  {\infer[\mbox{Empty}]{\DerDPLLTh\Delta{{\bot,\phi}} }{}}
  $$
\end{proof}

\subsection{\DPLLTh\ with backtracking}

In this section we describe the basic \DPLLTh\ procedure~\cite{Nieuwenhuis06}, and its encoding into \LKDPLLTh.

\begin{definition}[Basic \DPLLTh]
  {\em Models} are defined by the following grammar:
  $$ \Del\recdef \emptyenv \sep \Del,l^d \sep \Del,l$$
  where $l$ ranges over literals, and $l^d$ is an annotated literal called decision literal.

  The basic \DPLLTh\ procedure rewrites {\em states} of the form $\Del \| \phi$, with the following rewriting rules:
  \begin{itemize}
  \item Fail:\\
    $\Del \| \phi,C \Rightarrow$  \unsat,
    \hfill with $\forget{\Del}  \models \neg C$ and there is no decision literal in $\Delta$. 
  \item Decide:\\
    $\Delta  \| \phi \Rightarrow \Delta, l^d   \| \phi$
    \hfill where $l \not \in \Delta$, $\non l \not \in \Delta$, $l \not \in \phi $  or $\non l \not \in \phi$.
  \item
    Backtrack:\\
    $\Delta_1, l^d, \Delta_2 \| \phi, C \Rightarrow \Delta_1, \non l \| \phi, C$
    \hfill if $\forget{\Delta_1, l, \Delta_2} \models \neg C$ and no decision literal is in $\Delta_2$. 
  \item
    Unit propagation:\\ $\Delta \| \phi, C \vee l \Rightarrow \Delta, l \| \phi, C \vee l$
    \hfill where $\forget\Delta \models \neg C$, $l \not \in \Delta$, $\non l \not \in \Delta$. 
  \item 
    Theory Propagate: \\$\Delta \| \phi \Rightarrow \Delta, l \| \phi$
    \hfill where $\forget\Delta \models_\mathcal{T} l$, $l\in\atm\phi$ and $l \not \in \Delta, \non l \not \in \Delta$.
  \end{itemize}
  where $\forget\Delta$ denotes the result of erasing the annotations on decision literals, an operation defined in Figure~\ref{fig:decisioneraseT}.

  \begin{figure}[!h]
    \[
    \begin{array}{|c|}
      \upline
      \begin{array}{ll}
        \forget{ \emptyenv } &\eqdef  \emptyenv\\
        \forget{\Delta,l} &\eqdef \forget\Delta,l\\
        \forget{ \Delta,l^d} &\eqdef \forget\Delta,l
      \end{array}
      \downline
    \end{array}
    \]
    \caption{Erasing annotations}
    \label{fig:decisioneraseT}
  \end{figure}

\end{definition}

We now proceed with the encoding of the basic \DPLLTh\ procedure as the construction of a derivation tree in System \LKDPLLTh. The simulation could be be stated as follows:

If $\Del \| \phi \Rightarrow^* \unsat$ then there is a \LKDPLLTh\ proof of  $\DerDPLLTh {\forget \Delta} \phi $ (\ie there is no $\mathcal{T}$-model of $\phi$ extending $\Delta$).

This is true; however, there is more information in $\Del \| \phi \Rightarrow^* \unsat$ than in $\DerDPLLTh {\forget \Delta} \phi $, because the \DPLLTh\ sequence leading to $\unsat$ also backtracks on decision literals.  This means that not only there is no $\mathcal{T}$- model of $\phi$ extending $\forget \Delta$, but no matter how decision literals of $\Delta$ are changed, there is still no $\mathcal{T}$- model of $\phi$ that can be constructed. This notion is formalised by collecting the backtrack models as follows:

\begin{definition}[Backtrack models]
  In Fig.~\ref{fig:backtrackcollectT} we define the interpretation of a model as a collection (formally, a multiset) of sets of literals.
  \begin{figure}[!h]
    \[
    \begin{array}{|c|}
      \upline
      \begin{array}{ll}
        \backstrict{ \emptyenv } &\eqdef  \emptyset\\
        \backstrict{ \Delta,l } &\eqdef \backstrict \Delta\\
        \backstrict{ \Delta,l^d} &\eqdef \backpoints {\Delta,\non l}\\&\\
        \backpoints\Delta &\eqdef \backstrict\Delta\cup\multiset{\forget\Delta}
      \end{array}
      \downline
    \end{array}
    \]
    \caption{Collecting backtrack points}
    \label{fig:backtrackcollectT}
  \end{figure}
\end{definition}

\begin{remark}
  We have $\forget\Delta \in \backpoints \Delta $ and $\backstrict \Delta  \subseteq \backpoints \Delta$.
\end{remark}

We consider a notion of  a partial proof-tree to step-by-step simulate \DPLLTh\ runs.
\begin{definition}[Partial proof-tree]
  A partial proof-tree in \LKDPLLThp\ is a tree labelled with sequents, whose leaves are tagged as either \emph{open} or \emph{closed}, and such that every node that is not an open leaf is an instance of the \LKDPLLThp\ rules.\footnote{A partial proof-tree that has no open leaf is isomorphic to a derivation in \LKDPLLThp.} 

  A complete proof-tree is a partial proof-tree whose leaves are all closed.

  A partial proof-tree $\pi'$ is an \emph{$n$-extension} of $\pi$ if $\pi'$ is $\pi$ or if $\pi'$ is obtained from $\pi$ by replacing one of its open leaves by a partial proof-tree of size at most $n$ and whose conclusion has the same label as that leaf.
\end{definition}

\begin{definition}[Correspondence between \DPLLTh\ states and partial proof-trees]
  A partial proof-tree $\pi$ \emph{corresponds to} a \DPLLTh\ state $\Del \| \phi$ if the sequents labelling its open leaves form a sub-set of
  $\set{\DerDPLLTh{\Del'}{\phi} \mid \Del'\in\backpoints\Del}$.

  A partial proof-tree $\pi$ corresponds to \unsat\ if it has no open leaf.
\end{definition}

The \DPLLTh\ procedure starts from an initial state \ie $\emptyset \| {\phi}$, to which corresponds the partial proof-tree consisting of one node (both a root and a leaf) labelled with the sequent $\DerDPLLTh{}{\phi}$.

Note that, different partial proof-trees might correspond to the same \DPLLTh\ state, as different \DPLLTh\ runs can lead to that state from various initial \DPLLTh\ states. 
The simulation theorem below expresses the fact that, when \DPLLTh\ rewrites one state to another state, any partial proof-tree corresponding to the formal state can be extended into a partial proof-tree corresponding to the latter state.    

\begin{theorem}
  \label{TTFail}
  If  $ \Del \| \phi \Rightarrow \mathcal S_2$ is a rewrite step of  \DPLLTh\, and if $\pi_1$ corresponds to $\Del \| \phi$ then there is, in \LKDPLLThp, a $\size \phi+1$-extension $\pi_2$ of $\pi_1$ corresponding to $\mathcal S_2$.
\end{theorem}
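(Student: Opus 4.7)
The plan is to proceed by case analysis on the \DPLLTh\ rewrite rule used to go from $\Del\|\phi$ to $\mathcal S_2$. The unifying observation is that the transition from $\backpoints\Del$ to the backpoints of the new state either (i)~replaces an element $\forget\Del$ by one or two refined models (Decide, Theory Propagate, Unit Propagation), or (ii)~drops an element whose corresponding open leaf must then be closed (Fail, Backtrack). Weakening~2 will witness the refinements that preserve theory-consequences, and Lemma~\ref{lgt} will supply the closed subtrees demanded in case~(ii). The \emph{subset} inclusion in the correspondence definition is essential here: the extended tree is allowed to have fewer open leaves than there are backpoints.

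For \emph{Theory Propagate}, the hypothesis $\forget\Del\models_{\mathcal T} l$ gives $\mSat{\forget\Del}=\mSat{\forget\Del,l}$, hence $\nSat\phi{\forget\Del}=\nSat\phi{\forget\Del,l}$, so a single (uncounted) Weakening~2 moves the leaf $\DerDPLLTh{\forget\Del}{\phi}$ to $\DerDPLLTh{\forget\Del,l}{\phi}$. For \emph{Decide}, the main step is a Split on $l$, which at size~$1$ produces the two required leaves $\DerDPLLTh{\forget\Del,l}{\phi}$ and $\DerDPLLTh{\forget\Del,\non l}{\phi}$; in the corner case where $\forget\Del$ already entails $l$ or $\non l$, violating Split's side-condition, a single Weakening~2 to the consistent side suffices. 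For \emph{Fail}, the absence of decision literals in $\Del$ gives $\backpoints\Del=\{\forget\Del\}$, so the single open leaf $\DerDPLLTh{\forget\Del}{\phi,C}$ is closed by Lemma~\ref{lgt} within the allowed bound. For \emph{Backtrack}, a short computation with the definitions shows that only the element $\forget{\Del_1,l,\Del_2}$ of $\backpoints{\Del_1,l^d,\Del_2}$ is lost in passing to $\backpoints{\Del_1,\non l}$, and the corresponding leaf $\DerDPLLTh{\forget{\Del_1,l,\Del_2}}{\phi,C}$ is closed by Lemma~\ref{lgt} using the side-condition $\forget{\Del_1,l,\Del_2}\models\neg C$.

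The subtlest case is \emph{Unit Propagation}, $\Del\|\phi,C\vee l\Rightarrow\Del,l\|\phi,C\vee l$, where the leaf $\DerDPLLTh{\forget\Del}{\phi,C\vee l}$ must be extended so that only $\DerDPLLTh{\forget\Del,l}{\phi,C\vee l}$ remains open. The plan is to Split on $l$: the $l$-branch is the required leaf, while the $\non l$-branch satisfies $\forget\Del,\non l\models\neg(C\vee l)$ (combining $\forget\Del\models\neg C$ with $\non l\models\neg l$) and is closed by Lemma~\ref{lgt}, giving a total extension of size at most $\size\phi+2$, within the theorem's bound $\size{\phi,C\vee l}+1$. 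The main obstacle will be handling the degenerate cases in Decide and Unit Propagation where Split's side-conditions fail: these are dispatched either by Weakening~2 (when a literal is already entailed) or by Lemma~\ref{lgt} directly (when its negation is entailed, which collapses Unit Propagation to Fail-like behaviour). Care is needed in each case to verify that the labels of the resulting open leaves sit within the set of sequents indexed by the new state's backpoints.
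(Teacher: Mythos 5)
Your proof matches the paper's on four of the five cases --- Fail and Backtrack are discharged by Lemma~\ref{lgt} exactly as in the paper, Theory Propagate by a single Weakening~2, and Decide by the same three-way analysis (Split in the generic case, Weakening~2 towards the consistent side in the two corner cases where Split's side-condition fails), with the same bookkeeping of $\backpoints\Del$ throughout. The one place you genuinely diverge is Unit Propagation: you apply Split on $l$ and close the $\non l$-branch with Lemma~\ref{lgt} (using $\forget\Del,\non l\models\neg(C\vee l)$, obtained from $\forget\Del\models\neg C$), whereas the paper builds a branch-free derivation Resolve--Assert--Inverted Resolve: it strips $C$ off $C\vee l$ by Resolve (justified by $\forget\Del\models\neg C$), applies Assert to the resulting unit clause $l$, and then restores $C\vee l$ with the admissible Inverted Resolve so that the single open leaf is $\DerDPLLTh{\forget\Del,l}{\phi,C\vee l}$. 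Both routes are valid, both produce the required open leaf, and both fit the stated size bound; yours pays for the closed second branch with extra counted inference steps on the order of $\size{C\vee l}$, but in exchange never needs the Inverted Resolve rule (which in the paper's proof of this theorem is used only at this point), while the paper's linear derivation is slightly more economical in proof size. The degenerate subcases where the Split/Assert side-conditions fail are handled identically in the two proofs.
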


\begin{proof}
  By case analysis:
  \begin{itemize}

  \item Fail: $\Del \| \phi,C \Rightarrow^*\unsat$
    \hfill with $\forget{\Del}  \models \neg C$ and there is no decision literal in $\Delta$.

    Let $\pi_1$ be a partial proof-tree corresponding to $\Del \| \phi,C$. Since there are no decision literals in $\Del$, $\pi_1$ can have at most one open leaf, labelled by $\DerDPLLTh{\forget \Del}{\phi,C}$.

    We $\size{\phi,C}$+1-extend $\pi_1$ into $\pi_2$ by replacing that leaf by a complete tree deriving $\DerDPLLTh{\forget \Del}{\phi,C}$. We obtain that tree by applying Lemma~\ref{lgt} on the hypothesis $\forget\Del  \models \neg C$. The new tree $\pi_2$ is complete and therefore corresponds to the \unsat\ state of the \DPLLTh\ run.  

  \item Decide: $\Delta  \| \phi \Rightarrow \Delta, l^d   \| \phi$ \hfill where $l \not \in \Delta$, $\non l \not \in \Delta$, $l \in \phi $  or $\non l \in \phi$.\\

    Let $\pi_1$ be a partial proof-tree corresponding to $\Del \| \phi$.
    We $1$-extend it into $\pi_2$ by replacing the open leaf labelled with $\DerDPLLTh{\forget\Del}{\phi}$ (if there is such a leaf) by one of three proof-trees:
    \begin{itemize} 
    \item If $\forget \Del, l \models_\mathcal{T}$, we have $\mSat {\forget \Del}= \mSat {\forget \Del,\non l}$ and we take: 
      \[
      \iinfer[\mbox{Weakening2}]
      {\DerDPLLTh{\forget \Delta}{\phi}}
      {\DerDPLLTh{\forget \Del, \non l}{\phi}}
      \]
      The new open leaves form a sub-set of $\set{\DerDPLLTh{\forget \Delta, \non l}{\phi}}\cup\set{\DerDPLLTh{\Delta'}{\phi}\mid \Delta'\in\backstrict{\Del}}\subseteq\set{\DerDPLLTh{\Delta'}{\phi}\mid \Delta'\in\backpoints{\Del,l^d}}$ (since $\forget \Del, \non l = \forget{\Delta, \non l} \in \backpoints {\Delta, \non l} = \backstrict {\Delta, l^d} \subseteq \backpoints {\Delta, l^d}$) and therefore $\pi_2$ corresponds to $\Delta, l^d   \| \phi$.

    \item If $\forget \Del, \non l \models_\mathcal{T}$, we have $\mSat {\forget \Del}= \mSat {\forget \Del, l}$ and we take
      \[\iinfer[\mbox{Weakening2}]
      {\DerDPLLTh{\forget \Delta}{{\phi}}}
      {\DerDPLLTh{\forget \Del, l}\phi}
      \]
      The new open leaves form a sub-iset of  $\set{\DerDPLLTh{\forget \Delta, l}{\phi}}\cup\set{\DerDPLLTh{\Delta'}{\phi}\mid \Delta'\in\backstrict{\Del}}\subseteq\set{\DerDPLLTh{\Delta'}{\phi}\mid \Delta'\in\backpoints{\Del,l^d}}$ (since $\forget {\Del}, l  =\forget {\Del,l} \in \backpoints {\Del, l^d}$) and therefore $\pi_2$ corresponds to $\Delta, l^d   \| \phi$.

    \item If $\forget \Del, l \not\models_{\mathcal{T}} $ and $\forget \Del, \non l \not\models_{\mathcal{T}} $, we take
      \[\infer[\mbox{Split}] {\DerDPLLTh{\forget \Del}{{\phi}}} 
      {\DerDPLLTh{\forget \Del, l}{{\phi}}   
        \quad    
        {\DerDPLLTh{\forget \Del, \non l} {{\phi}}}}
      \]
      The new open leaves form a sub-set of  $\set{\DerDPLLTh{\forget \Delta, l}{\phi}}\cup\set{\DerDPLLTh{\forget \Delta, \non l}{\phi}}\cup\set{\DerDPLLTh{\Delta'}{\phi}\mid \Delta'\in\backstrict{\Del}}\subseteq\set{\DerDPLLTh{\Delta'}{\phi}\mid \Delta'\in\backpoints{\Del,l^d}}$ and therefore $\pi_2$ corresponds to $\Delta, l^d   \| \phi$. (since $\forget \Del, \non l = \forget{\Delta, \non l} \in \backpoints {\Delta, \non l} = \backstrict {\Delta, l^d} \subseteq \backpoints {\Delta, l^d}$)       
    \end{itemize}

  \item Backtrack: $\Delta_1, l^d, \Delta_2 \| \phi, C \Rightarrow \Delta_1, \non l \| \phi, C$\\
    \strut\hfill if $\forget{\Delta_1, l, \Delta_2} \models \neg C$ and no decision literal is in $\Delta_2$. \\
    
    Let $\pi_1$ be a partial proof-tree corresponding to $\Delta_1, l^d, \Delta_2 \| \phi, C$. Since there are no decision literal in $\Del_2$, $\pi_1$ can have at most one open leaf, labelled with $\DerDPLLTh{\forget {\Delta_1, l^d, \Delta_2} }{\phi,C}$.
    
    We $\size{\phi,C}$+1-extend $\pi_1$ into $\pi_2$ by replacing that leaf by a complete tree deriving $\DerDPLLTh{\forget {\Delta_1, l^d, \Delta_2} }{\phi,C}$. 
    We obtain that partial proof-tree by applying lemma~\ref{lgt} on the assumption $\forget {\Delta_1, l^d, \Delta_2} \models \neg C$.

    The new open leaves form a sub-set of  $\set{\DerDPLLTh{\forget \Delta_1, \non l}{\phi,C}}\cup\set{\DerDPLLTh{\Delta'}{\phi}\mid \Delta'\in\backstrict{\Del_1,l^d, \Del_2}}=\set{\DerDPLLTh{\Delta'}{\phi}\mid \Delta'\in\backstrict{\Del_1,l^d}}\subseteq\set{\DerDPLLTh{\Delta'}{\phi}\mid \Delta'\in\backpoints{\Del_1, \non l}}$ (since $\forget {\Del_1}, \non l  =\forget {\Del_1, \non l} \in \backpoints {\Del_1, \non l}$)  and therefore $\pi_2$ corresponds to $\Delta_1, \non l   \| \phi,C$ state of the \DPLLTh\ run.

  \item Unit propagation : $\Delta \| \phi, C \vee l \Rightarrow \Delta, l \| \phi, C \vee l$
    \hfill where $\forget\Delta \models \neg C$, $l \not \in \Delta$, $\non l \not \in \Delta$.
    
    Let $\pi_1$ be a partial proof-tree corresponding to $\Delta \| \phi, C \vee l$.
    We $\size{\phi, C \vee l}$+1-extend it into $\pi_2$ by replacing the open leaf labelled with $\DerDPLLTh{\forget\Del}{\phi,C \vee l}$ (if there is such a leaf) by one of three proof-trees:
    \begin{itemize} 
    \item If $\forget \Del, \non l \models_\mathcal{T}$, we have $\mSat {\forget \Del}= \mSat {\forget \Del, l}$ and we take: 
      \[
      \iinfer[\mbox{Weakening2}]
      {\DerDPLLTh{\forget \Delta}{\phi}}
      {\DerDPLLTh{\forget \Del, l}{\phi}}
      \]
      The new open leaves form a sub-set of $\set{\DerDPLLTh{\forget \Delta,  l}{\phi,C \vee l}}\cup\set{\DerDPLLTh{\Delta'}{\phi,C \vee l}\mid \Delta'\in\backstrict{\Del}}\subseteq\set{\DerDPLLTh{\Delta'}{\phi,C \vee l}\mid \Delta'\in\backpoints{\Del,l}}$ (since $\forget \Del,  l = \forget{\Delta, l} \in \backpoints {\Delta, l}$) and therefore $\pi_2$ corresponds to $\Delta, l \| \phi, C \vee l$.

    \item If $\forget \Del,  l \models_\mathcal{T}$ then lemma~\ref{lgt} directly provides a partial proof-tree of $\DerDPLLTh{\forget\Del}{\phi,C \vee l}$.  

    \item If $\forget \Del, l \not\models_{\mathcal{T}} $ and $\forget \Del, \non l \not\models_{\mathcal{T}} $, we can construct the following tree:
      \[ 
      \Infer[\mbox{Resolve}]{\DerDPLLTh{\forget\Delta}{{\phi},C\vee l }         }
      { 
        \infer[\mbox{Assert}] {\DerDPLLTh{\forget\Delta}   {{\phi},l}    }
        {
          { \iinfer[\mbox{Inverted Resolve}] {[=]\DerDPLLTh{\forget \Delta,l}{\phi,l}    }
            {\DerDPLLTh{\forget \Delta,l}{\phi,C \vee l}     }
          }
        } 
      } 
      \] 
      
      where the side-conditions of $Resolve$ are provided by the hypothesis $\Del'' \models \neg C$.

      The new open leaves form a sub-set of  $\set{\DerDPLLTh{\forget \Delta, l}{\phi,C \vee l}}\cup\set{\DerDPLLTh{\Delta'}{\phi}\mid \Delta'\in\backstrict{\Del}} \subseteq \set{\DerDPLLTh{\Delta'}{\phi}\mid \Delta'\in\backpoints{\Del,l}} $ and therefore $\pi_2$ corresponds to $\Delta, l \| \phi,C \vee l$. (since $\forget \Del, l = \forget{\Delta, l} \in  \backpoints {\Delta, l}$)       
    \end{itemize}

  \item Theory Propagate: $\Delta \| \phi \Rightarrow \Delta, l \| \phi$
    \hfill where $\forget\Delta \models_\mathcal{T} l$, $l\in\atm\phi$ and $l \not \in \Delta, \non l \not \in \Delta$.\\
    
    Let $\pi_1$ be a partial proof-tree corresponding to $\Delta \| \phi$.
    We $1$-extend it into $\pi_2$ by replacing the open leaf labelled with $\DerDPLLTh{\forget\Del}{\phi}$ by the following proof-tree :
    \[
    \iinfer[\mbox{Weakening2}]
    {\DerDPLLTh{\forget \Delta}{\phi}}
    {\DerDPLLTh{\forget \Del, l}{\phi}}
    \]
    The new open leaves form a sub-set of $\set{\DerDPLLTh{\forget \Delta,  l}{\phi}}\cup\set{\DerDPLLTh{\Delta'}{\phi}\mid \Delta'\in\backstrict{\Del}}\subseteq\set{\DerDPLLTh{\Delta'}{\phi}\mid \Delta'\in\backpoints{\Del,l}}$ (since $\forget \Del, l = \forget{\Delta, l}  \subseteq \backpoints {\Delta, l}$) and therefore $\pi_2$ corresponds to $\Delta, l \| \phi$.

  \end{itemize}
\end{proof}

\begin{corollary} 
  \label{clkdpllth}
  \LKDPLLTh\ is complete, \ie if $ \phi \models_{\mathcal{T}} $ then $\DerDPLLTh{}{\phi}$.
\end{corollary}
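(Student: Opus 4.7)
The plan is to chain together three ingredients: the well-known completeness of the basic \DPLLTh\ procedure from~\cite{Nieuwenhuis06}, the simulation Theorem~\ref{TTFail}, and the translation from \LKDPLLThp-derivations into \LKDPLLTh-derivations recorded in the remark just after Figure~\ref{fig:lkdpllthp}. So the proof will essentially be a straight bookkeeping argument, not a new mathematical step.

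First, I would assume $\phi \models_{\mathcal T}$ and invoke the completeness of \DPLLTh\ to obtain a finite rewrite sequence $\emptyset\|\phi \Rightarrow \mathcal S_1 \Rightarrow \mathcal S_2 \Rightarrow \cdots \Rightarrow \mathcal S_k = \unsat$. Second, I would take as starting partial proof-tree $\pi_0$ the single-node tree whose unique open leaf is labelled $\DerDPLLTh{}{\phi}$; by definition of correspondence, $\pi_0$ corresponds to the initial state $\emptyset\|\phi$ (note that $\forget\emptyset = \emptyset \in \backpoints\emptyset = \multiset{\emptyset}$).

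Third, I would apply Theorem~\ref{TTFail} inductively along the rewrite sequence: supposing $\pi_i$ corresponds to $\mathcal S_i$, the theorem produces a $(\size\phi+1)$-extension $\pi_{i+1}$ corresponding to $\mathcal S_{i+1}$. After $k$ steps this yields a partial proof-tree $\pi_k$ in \LKDPLLThp\ corresponding to $\unsat$, which by definition means $\pi_k$ has no open leaves, \ie it is a complete \LKDPLLThp-derivation of $\DerDPLLTh{}{\phi}$. Finally, the size-preserving admissibility of Weakening~1 (Lemma~\ref{lwt1}), Weakening~2 (Lemma~\ref{sat2}) and Inverted Resolve in \LKDPLLTh\ lets us transform $\pi_k$ into a genuine \LKDPLLTh-derivation of the same sequent, as already noted in the remark following the definition of \LKDPLLThp.

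The only step that is not entirely routine is the appeal to completeness of the basic \DPLLTh\ procedure itself, but this is a standard result in~\cite{Nieuwenhuis06} and is really the content of the corollary; everything that is actually new here — turning a \DPLLTh\ run into an \LKDPLLTh\ proof — is handled by Theorem~\ref{TTFail}. I would therefore expect the written proof to be only a few lines long, essentially amounting to ``induction on the length of the \DPLLTh\ rewrite sequence, using Theorem~\ref{TTFail} at each step, then translating \LKDPLLThp\ back to \LKDPLLTh''.
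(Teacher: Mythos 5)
Your proposal is correct and is exactly the paper's argument: the paper's proof is the one-liner ``By completeness of basic \DPLLTh\ and Theorem~\ref{TTFail}'', and your write-up simply spells out the induction along the rewrite sequence and the final translation from \LKDPLLThp\ back to \LKDPLLTh\ that this one-liner leaves implicit.
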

\begin{proof}
  By completeness of basic \DPLLTh\ and Theorem~\ref{TTFail}.
\end{proof}

\subsection{\DPLLTh\ with backjumping and Lemma learning}
We now consider a more advanced version of \DPLLTh, which involves backjumping and lemma learning features, and which we denote \DPLLbjTh. \DPLLbjTh\ extends basic \DPLLTh\ with the rules known as $\mathcal{T}$-Backjump, $\mathcal T$-Learn, $\mathcal T$-Forget, and Restart~\cite{Nieuwenhuis06}. Those rules drastically increase the efficiency of SMT-solvers.

\begin{itemize}
\item[$\mathcal{T}$-Backjump:]  
  $\Delta_1, l^d, \Delta_2 \| \phi, C \Rightarrow \Delta_1, l_{bj} \| \phi, C$ with
  \begin{enumerate}
  \item $\forget{\Delta_1, l^d,\Delta_2} \models \neg C$. 
  \item $\forget {\Delta_1} \models \neg C'$
  \item $\phi, C \models_{\mathcal{T}} C' \vee l_{bj}$
  \item $l_{bj} \not \in \Delta_1$, $\non l_{bj} \not \in \Delta_1$ and $l_{bj} \in \atm{\phi, \Delta_1,l^d,\Delta_2}$.
  \end{enumerate}
  for some clause $C'$ such that $\atm{C'}\subset\atm{\phi,C}$.
\item[$\mathcal T$-Learn:] $\Delta \| \phi \Rightarrow \Delta \| \phi,C$ if $\atm C\subseteq\atm{\phi,\Delta}$ and $\phi \models_{\mathcal{T}} C$.
\item[$\mathcal T$-Forget:] $\Delta \| \phi , C \Rightarrow \Delta \| \phi$ if $\phi \models_{\mathcal{T}} C$.
\item[Restart:] $\Delta \| \phi \Rightarrow \emptyset \| \phi$. 
\end{itemize}

In order to simulate those extra rules in \LKDPLLTh, we need to extend \LKDPLLTh\ with a cut rule as follows:

\begin{definition}[\LKDPLLTh\ with cut]
  System \LKDPLLThc\ is obtained by extending system \LKDPLLThp\ with the following cut-rule:
  \[\begin{array}{c}
    \infer[Cut $ where $ C = \non l_1,\ldots,\non l_n]  { \DerDPLLTh {\Del} {\phi} } 
    { \DerDPLLTh {\Del} {\phi, l_1,\ldots,l_n}
      \quad  \DerDPLLTh{\Del}  {\phi,C}
    }
  \end{array} \]	

  We define the size of proof-trees in \LKDPLLThc\ as we did for \LKDPLLThp\ (ignoring \emph{Weakening1}, \emph{Weakening2} or the \emph{Inverted Resolve}), but also ignoring the left-branch of the cut-rules.\footnote{As we shall see in the simulation theorem, this definition mimicks the fact that the length of \DPLLTh\ sequences is a complexity measure that ignores the cost of checking the side-conditions.} 
\end{definition}

\begin{definition}[$n,\phi,\mathcal{S}$-sync action]
  $\pi_\phi$ is a $n,\phi,\mathcal{S}$-sync action if it is a function that maps every model $\Del \in \mathcal{S}$ to a partial proof-tree of size at most $n$ and concluding $\DerDPLLTh{\Del}{\phi}$.
\end{definition}

\begin{definition}[Parallel $n$-extension of partial proof-trees]
  $\pi_2$ is a \emph{parallel n-extension} of $\pi_1$ according to $\pi_\phi$ if $\pi_\phi$ is a $n,\phi,\mathcal{S}$-sync action and if $\pi_2$ is obtained from $\pi_1$ by replacing all the open leaves of $\pi_1$ labelled by sequents of the form $\DerDPLLTh{\Del}{\phi}$ (where $\Del\in \mathcal{S}$) by $\pi_\phi(\Del)$.
\end{definition}

\begin{theorem} 
  \label{TMFail}
  If  $\Delta \| \phi\Rightarrow_{\mbox{\DPLLbjTh\ }} \mathcal S_2$ and $\pi_1$ corresponds to $\Del \| \phi$, there is parallel $\size {\phi} +3$-extension $\pi_2$ of $\pi_1$ (according to some $\pi_\phi$) such that $\pi_2$ corresponds to $\mathcal S_2$.  
\end{theorem}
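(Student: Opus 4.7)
We proceed by case analysis on the \DPLLbjTh\ rule applied in the step $\Delta\|\phi\Rightarrow\mathcal{S}_2$. The rules inherited from basic \DPLLTh---Fail, Decide, Backtrack, Unit propagation and Theory Propagate---are already handled by Theorem~\ref{TTFail}, which produces non-parallel $\size\phi+1$-extensions; these fit into the parallel $\size\phi+3$-extension framework by taking $\mathcal S$ to be the singleton containing the model of the modified leaf and letting $\pi_\phi$ be the replacement tree supplied by that theorem.

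For $\mathcal T$-Forget, the action $\pi_\phi(\Delta')$ is a single Weakening~1 above the open leaf $\DerDPLLTh{\Delta'}{\phi}$. For Restart, it is a single Weakening~2 above the open leaf $\DerDPLLTh{\es}{\phi}$, whose side-condition reduces to $\mSat{\es}\subseteq\mSat{\Delta'}$. For $\mathcal T$-Learn, $\pi_\phi(\Delta')$ is a cut with cut-clause $C$: the right branch exposes the open leaf $\DerDPLLTh{\Delta'}{\phi,C}$, and the left branch is closed by invoking the completeness Corollary~\ref{clkdpllth} on the hypothesis $\phi\models_{\mathcal T}C$, composed with Weakening~2 to adapt the model to $\Delta'$. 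After discounting the Weakenings and the cut's left branch, the size is at most $1$ in each of these three cases.

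The main obstacle is $\mathcal T$-Backjump, since $\backpoints{\Delta_1,l^d,\Delta_2}$ and $\backpoints{\Delta_1,l_{bj}}$ differ in exactly two models, $\forget{\Delta_1,l^d,\Delta_2}$ and $\forget{\Delta_1},\non l$, which together form our choice of $\mathcal S$. The first leaf is closed via Lemma~\ref{lgt} applied to the hypothesis $\forget{\Delta_1,l^d,\Delta_2}\models\neg C$, yielding a subtree of size at most $\size\phi$. For the second, the bridge from the model $\forget{\Delta_1},\non l$ in the conclusion to the model $\forget{\Delta_1},l_{bj}$ in the new open leaf is constructed, reading the subtree from root upwards, as: a Weakening~2 reducing the model to $\forget{\Delta_1}$ (valid since adding a literal only enlarges $\mSat{\cdot}$); a cut on $D=C'\vee l_{bj}$, whose left branch is closed by completeness applied to condition~3 $\phi,C\models_{\mathcal T}D$; in the right branch, $\size{C'}$ applications of Resolve to strip each literal $l_i'$ of $C'$ from $D$, each justified by $\forget{\Delta_1}\models\non l_i'$ (a consequence of condition~2 $\forget{\Delta_1}\models\neg C'$), reaching $\DerDPLLTh{\forget{\Delta_1}}{\phi,C,l_{bj}}$; an Assert on the unit clause $l_{bj}$ (replaced by Weakening~2 in the degenerate cases where $\forget{\Delta_1}$ already entails $l_{bj}$ or $\non l_{bj}$); and finally a Weakening~1 introducing the unit clause $l_{bj}$, exposing the desired open leaf $\DerDPLLTh{\forget{\Delta_1},l_{bj}}{\phi,C}\in\backpoints{\Delta_1,l_{bj}}$. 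The size of this subtree, counting only the cut, the Resolves and the Assert, is $\size{C'}+2\leq\size\phi+2$, well within the $\size\phi+3$ budget.
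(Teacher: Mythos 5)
Your overall architecture matches the paper's: dispatch the inherited rules to Theorem~\ref{TTFail}, handle $\mathcal T$-Learn, $\mathcal T$-Forget and Restart with a cut, a Weakening~1 and a Weakening~2 respectively, and for $\mathcal T$-Backjump bridge from $\forget{\Delta_1}$ to $\forget{\Delta_1},l_{bj}$ via a cut on $C'\vee l_{bj}$ whose left branch is closed by Corollary~\ref{clkdpllth} from condition~3, followed by Resolve steps justified by condition~2 and an Assert. That bridge subtree is essentially the one the paper builds. But there is a genuine gap in your choice of $\mathcal S$ for $\mathcal T$-Backjump: the rule does \emph{not} require $\Delta_2$ to be free of decision literals --- that is precisely what distinguishes backjumping from backtracking --- so $\backstrict{\Delta_1,l^d,\Delta_2}$ may strictly contain $\backstrict{\Delta_1,l^d}=\backstrict{\Delta_1}\cup\multiset{\forget{\Delta_1},\non l}$; every decision literal inside $\Delta_2$ contributes a further backtrack point. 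Hence the two sets of backtrack points do not ``differ in exactly two models'', and with your two-element $\mathcal S$ any open leaf of $\pi_1$ labelled by one of these intermediate backtrack points survives untouched into $\pi_2$. Since such a model is not in $\backpoints{\Delta_1,l_{bj}}$, the resulting $\pi_2$ does not correspond to $\Delta_1,l_{bj}\|\phi,C$.

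The fix is exactly the paper's move: take $\mathcal S=\backpoints{\Delta_1,l^d,\Delta_2}\setminus\backstrict{\Delta_1}$ and apply your ``second leaf'' construction uniformly to every $\Delta\in\mathcal S$. Every such $\Delta$ extends $\forget{\Delta_1}$, so the initial Weakening~2 from $\Delta$ down to $\forget{\Delta_1}$ is valid in all cases, including $\Delta=\forget{\Delta_1,l^d,\Delta_2}$ (which you instead close via Lemma~\ref{lgt}; that also works, but need not be a separate case). The rest of your subtree then goes through unchanged --- the paper reinstates the clause set $\phi,C$ with a Subsume where you use a Weakening~1, both within the $\size\phi+3$ budget --- and all new open leaves are the single sequent $\DerDPLLTh{\forget{\Delta_1},l_{bj}}{\phi,C}\in\backpoints{\Delta_1,l_{bj}}$, as required.
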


\begin{proof} 
  Since \LKDPLLTh\ is a sub-system of \LKDPLLThc, we only need to simulate (in \LKDPLLThc) the new rules.\\
  
  \begin{itemize}
  \item[$\mathcal{T}$-Backjump:] 
    $\Delta_1, l^d, \Delta_2 \| \phi, C \Rightarrow \Delta_1, l_{bj} \| \phi, C$ with
    \begin{enumerate}
    \item $\forget{\Delta_1, l^d,\Delta_2} \models \neg C$. 
    \item $\forget {\Delta_1} \models \neg C'$
    \item $\phi, C \models_{\mathcal{T}} C' \vee l_{bj}$
    \item $l_{bj} \not \in \Delta_1$, $\non l_{bj} \not \in \Delta_1$ and $l_{bj} \in \atm{\phi, \Delta_1,l^d,\Delta_2}$.
    \end{enumerate}

    Let $\pi_1$ be a partial proof-tree corresponding to $\Delta_1, l^d, \Delta_2 \| \phi, C$. We have to build a $\pi_2$ that corresponds to $\Delta_1, l_{bj} \| \phi, C$ in the \DPLLbjTh\ run. This means that the open leaves of $\pi_2$ should be labelled with sequents of the form $\DerDPLLTh{\Del'}{\phi,C}$ where $\Del'\in \backpoints{\Del_1,l_{bj}}$ .
    
    Let $\mathcal{S} = \backpoints {\Del_1, l^d, \Del_2}  \backslash \backstrict {\Del_1}$ and $\pi_\phi$ be the $\size{\phi,C}$+3, $\phi,C$,$\mathcal{S}$-sync action that maps every $\Del \in \mathcal{S}$ to

    \[\begin{array}{c}
      \infer[Weakening 2]{\DerDPLLTh {\Delta}{\phi, C}}    
      { \infer[cut]{\DerDPLLTh {\forget{\Delta_1}}{\phi, C}}    
        {  
          \iinfer[Weakening 2]{\DerDPLLTh{\forget{\Delta_1}}{\phi,C,\neg C',\non l_{bj} }}{\DerDPLLTh{}{\phi,C,\neg C',\non l_{bj} }}
          \quad 
          \Infer[Resolve]{\DerDPLLTh{\forget {\Delta_1}}{\phi,C,C'\vee{l_{bj}} }}
          {
            \infer[Assert]{\DerDPLLTh{\forget{\Delta_1}}{\phi,C,{l_{bj}} }}
            {
              \infer[Subsume]{\DerDPLLTh{\forget{\Delta_1},{l_{bj}}}{\phi,C,{l_{bj}} }}
              {
                \DerDPLLTh{\forget{\Delta_1},{l_{bj}}}{\phi,C}
              }
            }
          }
        } }
    \end{array}
    \]

    It is a valid partial proof-tree because $\Del \in \mathcal{S}$ entails $\forget {\Del_1} \subseteq \Del$ and therefore $\nSat{\phi}{\forget {\Del_1}} \subseteq \nSat{\phi}{\Del}$.
    The left branch is closed by assumption (3) and the completeness of \LKDPLLTh\ on $\phi, C,\neg C',\non l_{bj}\models_{\mathcal{T}}$ (Corollary~\ref{clkdpllth}). We cannot anticipate the size of the proof-tree closing that branch, and we therefore ignore that proof-tree to compute the size of the whole tree, just as the length of the \DPLLTh\ run ignores the cost of checking $\phi, C\models_{\mathcal{T}} C'\vee l_{bj}$.\\

    Let $\pi_2$ be the \emph{parallel} $\size{\phi,C}+3$-extension of $\pi_1$ according to $\pi_\phi$.  The new open leaves form a sub-set of $\set{\DerDPLLTh{\forget {\Delta_1}, l_{bj}}{\phi,C}}\cup\set{\DerDPLLTh{\Delta'}{\phi}\mid \Delta'\in\backstrict{\Del_1}} \subseteq \set{\DerDPLLTh{\Delta'}{\phi}\mid \Delta'\in\backpoints{\Del_1,l_{bj}}}$ (since $\forget {\Del_1}, l_{bj} = \forget{\Delta_1, l_{bj}} \in \backpoints {\Del_1,l_{bj}}$ and $\backstrict{{\Del_1},l_{bj}}=\backstrict{\Del_1}$ ) and therefore $\pi_2$ corresponds to $\Delta_1, l_{bj}   \| \phi,C$.

  \item[$\mathcal T$-Learn:] $\Delta \| \phi \Rightarrow \Delta \| \phi,C$ if each atom of $C$ occurs in $\phi$ or in $\Delta$ and $\phi \models_{\mathcal{T}} C$.

    Let $\pi_1$ be a partial proof-tree corresponding to $\Delta \| \phi$. We have to build a $\pi_2$ that corresponds to $\Delta \| \phi,C$ in the \DPLLbjTh\ run. This means that the open leaves of $\pi_2$ should be labelled with sequents of the form $\DerDPLLTh{\Del'}{\phi,C}$ where $\Del'\in \backpoints{\Del}$ .
    
    Let $\mathcal{S} = \backpoints \Del$ and $\pi_\phi$ be the $\size{\phi}$,$\phi$,$\mathcal{S}$-sync actionthat maps every $\Del \in \mathcal{S}$ to:

    %

    \[ 
    \infer[cut]{  \DerDPLLTh {\Del} {\phi}   }
    { 
      \iinfer[Weakening2]{\DerDPLLTh {\Del} {{\phi, \neg C} }    }
      {\DerDPLLTh {} {{\phi, \neg C} }    }
      \quad 
      {\DerDPLLTh {\forget\Del} {{\phi,  C} }    } 
    } 
    \]

    The left branch of the cut is closed by assumption and completeness of \LKDPLLTh\ on $\phi,\neg C\models_{\mathcal{T}}$ (Corollary~\ref{clkdpllth}). We cannot anticipate the size of the proof-tree closing that branch, and we therefore ignore that proof-tree to compute the size of the whole tree, just as the length of the \DPLLTh\ run ignores the cost of checking $\phi \models_{\mathcal{T}} C$. \\

    Let $\pi_2$ be the \emph{parallel} $\size{\phi}$-extension of $\pi_1$ according to $\pi_\phi$.  The new open leaves form a sub-set of $\set{\DerDPLLTh{\Delta'}{\phi,C}\mid \Delta'\in\backpoints{\Del}}$ and therefore $\pi_2$ corresponds to $\Del \| \phi,C$.  .

    
  \item[$\mathcal T$-Forget:] $\Delta \| \phi , C \Rightarrow \Delta \| \phi$ if $\phi \models_{\mathcal{T}} C$.\\

    Let $\pi_1$ be a partial proof-tree corresponding to $\Delta \| \phi,C$. We have to build a $\pi_2$ that corresponds to $\Delta \| \phi$ in the \DPLLbjTh\ run. This means that the open leaves of $\pi_2$ should be labelled with sequents of the form $\DerDPLLTh{\Del'}{\phi}$ where $\Del'\in \backpoints{\Del}$ .
    
    Let $\mathcal{S} = \backpoints {\Del}$ and $\pi_\phi$ be the $1,\phi,C$,$\mathcal{S}$-sync action that maps every $\Del' \in \mathcal{S}$ to

    \[
    \iinfer[Weakening1]
    {\DerDPLLTh{ \Delta'}{\phi,C}}
    {\DerDPLLTh{\Delta'}{\phi}}
    \]

    Let $\pi_2$ be the \emph{parallel} $1$-extension of $\pi_1$ according to $\pi_\phi$. The new open leaves form a sub-set of $\set{\DerDPLLTh{\Delta'}{\phi}\mid \Delta'\in\backpoints{\Del}}$ and therefore $\pi_2$ corresponds to $\Del \| \phi$.

  \item[Restart:] $\Delta \| \phi \Rightarrow \emptyset \| \phi$.\\

    Let $\pi_1$ be a partial proof-tree corresponding to $\Delta \| \phi$. We have to build a $\pi_2$ that corresponds to $\emptyset\| \phi$ in the \DPLLbjTh\ run. This means that the open leaves of $\pi_2$ should be labelled with sequents of the form $\DerDPLLTh{}{\phi}$ 
    
    Let $\mathcal{S} = \backpoints {\Del}$ and $\pi_\phi$ be the $1,\phi$,$\mathcal{S}$-sync action that maps every $\Del' \in \mathcal{S}$ to:
    \[
    \iinfer[Weakening2]
    {\DerDPLLTh{\Delta'}{\phi}}
    {\DerDPLLTh{}{\phi}}
    \]

    Let $\pi_2$ be the \emph{parallel} $1$-extension of $\pi_1$ according to $\pi_\phi$. The new open leaves form a sub-set of $\set{\DerDPLLTh{}{\phi}}$ and therefore $\pi_2$ corresponds to $\emptyset \| \phi$. 

  \end{itemize}
\end{proof} 
%

\newpage
\section{Encoding \LKDPLLTh\ in \LKThp}

\subsection{Preliminaries: System \LKThp}

In this section we introduce (the propositional fragment of) system \LKThp.


\begin{definition}[Formulae, negation]\strut
  \label{def:seminconsistency}
  The formulae of \LKThp\ are given by the following grammar:
  \[
  \begin{array}{lll}
    \mbox{Formulae }&A,B,\ldots&\recdef l\mid A\andP B\mid A\orP B\mid A\andN B\mid A\orN B
  \end{array}
  \]
  where $l$ ranges over literals.

  Let $\mathcal P$ be a set of literals declared to be \emph{positive}, while their negations, required to not be in $\mathcal P$, are declared to be \emph{negative}.
  Given such a set $\mathcal P$, we define positive formulae and negative formulae as the formulae generated by the following grammars:
  \[
  \begin{array}{lll}
    \mbox{positive formulae }&P,\ldots&\recdef p\mid A\andP B\mid A\orP B\\
    \mbox{negative formulae }&N,\ldots&\recdef \non p\mid A\andN B\mid A\orN B\\
  \end{array}
  \]
  where $p$ ranges over $\mathcal P$.

  Negation is recursively extended into a involutive map from formulae to formulae as follows:
  \[
  \begin{array}{|ll|ll|}
    \hline
    \non{(A\andP B)}&\eqdef \non A \orN\non B&\non{(A\andN B)}&\eqdef \non A \orP\non B    \\
    \non{(A\orP B)}&\eqdef \non A \andN\non B&\non{(A\orN B)}&\eqdef \non A \andP\non B    \\
    \hline
  \end{array}
  \]
\end{definition}

\begin{definition}[System \LKThp]

  The sequent calculus \LKThp\ has two kinds of sequents:
  \begin{centre}
    \begin{tabular}{ l c r }
      $\DerPosTh \Gamma P {}{}$ &  &  where \emph{P} is in the \emph{focus} of the sequent\\
      $\DerNegTh \Gamma {\Gam'}{}{}$ &  &
    \end{tabular}
  \end{centre}
  Its rules are given in Figure~\ref{fig:LKThp}.

  $\Theory\Delta$ is the call to the decision procedure on the conjunction of all atomic formulae within $\Delta$. It holds if the procedure returns \textsf{UNSAT}.
\end{definition}


\begin{figure}[!h]
  $$
  \begin{array}{|c|}
    \upline
    \infer{\DerPosTh{\Gamma}{A\andP B}{}{\mathcal{P}}}
    {\DerPosTh{\Gamma}{A}
      {}{\mathcal P} \qquad \DerPosTh{\Gamma}{B}{}{\mathcal{P}}}
    \qquad
    \infer{\DerPosTh{\Gamma}{A_1\orP A_2}{}{\mathcal{P}}}
    {\DerPosTh{\Gamma}{A_i}{}{\mathcal{P}}}
    \\\\
    \infer{\DerPosTh{\Gamma,p}{p} {} {\mathcal{P},p}} {\strut}
    \qquad
    \infer{\DerPosTh{\Gamma  }{p} {} {\mathcal{P},p}} {\Theory{\Gam,\non p}}
    \\\\
    \infer[N\mbox{ negative}]{\DerPosTh {\Gam} {N} {} {\mathcal{P}}}
    {\DerNegTh {\Gam} {N} {} {\mathcal{P}}}
    \midline[15]
    \infer{\DerNegTh{\Gamma}{A\andN B,\Delta} {} {\mathcal{P}}}
    {\DerNegTh{\Gamma}{A,\Delta} {} {\mathcal{P}} 
      \qquad \DerNegTh{\Gamma}{B,\Delta} {} {\mathcal{P}}}
    \qquad
    \infer{\DerNegTh {\Gamma} {A_1\orN A_2,\Delta} {} {\mathcal{P}}}
    {\DerNegTh {\Gamma} {A_1,A_2,\Delta} {} {\mathcal{P}}}
    \qquad
    \infer[A\mbox{ positive or atom}]{\DerNegTh \Gam {A,\Delta} {} {\mathcal{P}}} 
    {\DerNegTh {\Gam,\non A} {\Delta} {} {\mathcal{P}}}
    \midline[15]
    \infer{\DerNegTh {\Gam} {} {O,l}{\mathcal{P}}}
    {\DerNegTh{\Gam} {}{O}{\mathcal{P},l}}
    \qquad
    \infer[\begin{array}l P \mbox{ positive}
    \end{array}]
    {\DerNegTh {\Gam,\non P} {}{} {\mathcal{P}}} 
    {\DerPosTh {\Gam,\non P} {P} {} {\mathcal{P}}}
    \qquad
    \infer{\DerNegTh {\Gam} {}{} {\mathcal{P}}}{\Theory \Gam}
    \downline
  \end{array}
  $$
  \caption{System \LKThp}
  \label{fig:LKThp}
\end{figure}

We also consider two cut-rules. The analytic cut:
$$
\infer
{\DerNegTh {\Gam} {} {O}{\mathcal{P}}} 
{\DerNegTh{\Gam,l} {}{O}{\mathcal{P}}\quad \DerNegTh{\Gam,\non l} {}{O}{\mathcal{P}}}
$$
with the condition that $l$ appears in $\Gam$.

The general cut:
\[ 
\infer{\DerNegTh {\Gam} {}{} {\mathcal P}}
{
  \DerNegTh {\Gam, l_1,\ldots,l_n} {}{} {\mathcal P}
  \quad
  \DerNegTh {\Gam, (\non {l_1}\orN\ldots\orN \non {l_n})} {}{} {\mathcal P}
} 
\]

\subsection{Simulation}

We now encode \LKDPLLTh\ in \LKThp.

The main gap between \LKDPLLTh\ (or even \DPLLTh) and a sequent calculus such as \LKThp\ is the fact that the structures handled by the former are very flexible (\eg clauses are multisets of literals), while sequent calculus implements a root-first decomposition of formulae trees.

Clauses in \DPLLTh\ (and in \LKDPLLTh) are disjunctions considered modulo associativity and commutativity.
The way we encode them as formulae of sequent calculus is as follows: a clause $C$ will be represented by a formula $C'$ which is a disjunctive tree whose leaves contain at least all the literals of $C$ but also other literals that we can consider as garbage.

Of course, one could fear that the presence of garbage parts within $C'$ degrades the efficiency of proof-search when simulating \DPLLTh.
This garbage comes from the original clauses at the start of the \DPLLTh\ rewriting sequence, which might have been simplified in later steps of \DPLLTh\ but which remain unchanged in sequent calculus. The size of the garbage is therefore smaller than the size of the original problem. We ensure that the inspection, by the proof-search process, of the garbage in $\non {C'}$, takes no more inference steps than the size of the garbage itself (the waste of time is linear in the size of the garbage). In order to ensure this, we use polarities and the focusing properties of \LKThp: the garbage literals in $C'$ must be negative atoms that are negated in the model/context.

\begin{definition}[$\mathcal{P}$-correspondence]Let $\mathcal{P}$ be a multiset of literals.
  \begin{itemize}
  \item
    \label{def-dpl1}
    A formula $C'$ $\mathcal{P}$-corresponds to a clause $C$ (in system \LKDPLLTh), where $C= l_{1} \vee\ldots\vee l_{p}$, if $C'=l'_{1} \orN \ldots \orN {l'_{p'}}$ with $\{l_{j}\}_{j=1\ldots p} \subseteq \{l'_{j}\}_{j=1\ldots p'}$ and for any $l \in \{l'_{j}\}_{j=1\ldots p'} \backslash \{l_{j}\}_{j=1\ldots p} $, $\non l \in \mathcal{P}$ .
  \item
    \label{def-dpl2}
    A \LKThp\ sequent  $\DerNegTh {\Delta,C'{_1},\ldots,C'{_m}} {}{} {\mathcal{P}}$ corresponds to a \LKDPLLTh\ sequent $\DerDPLLTh{\Delta} {C_1,\ldots,C_m}$, if $C'_i$ $\mathcal{P}$-corresponds to $C_i$ and for all $l \in \mathcal{P}$, $\Delta \models_\mathcal{T} l$. 
  \end{itemize}     

\end{definition}

\begin{lemma}  If $C'$ $\mathcal{P}$-corresponds to $C$, then $C'$ also $(\mathcal{P},l)$-corresponds to $C$.
\end{lemma}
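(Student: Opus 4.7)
The plan is to unfold the definition of $\mathcal{P}$-correspondence and observe that enlarging the literal multiset $\mathcal{P}$ to $\mathcal{P},l$ only weakens the side-condition imposed on ``garbage'' literals. Concretely, if $C = l_1 \vee \ldots \vee l_p$ and $C' = l'_1 \orN \ldots \orN l'_{p'}$, the structural conditions (the shape of $C'$ as an $\orN$-tree and the inclusion $\{l_j\}_{j=1..p}\subseteq\{l'_j\}_{j=1..p'}$) do not mention $\mathcal{P}$ at all, so they transfer verbatim to the $(\mathcal{P},l)$-correspondence statement.

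The only remaining condition is the one on garbage literals: for every $l''\in\{l'_j\}_{j=1..p'}\setminus\{l_j\}_{j=1..p}$, we must have $\non{l''}\in\mathcal{P},l$. From the hypothesis of $\mathcal{P}$-correspondence we already know $\non{l''}\in\mathcal{P}$, and since $\mathcal{P}$ is a sub-multiset of $\mathcal{P},l$ by construction, the membership is preserved. Hence the side-condition holds for $(\mathcal{P},l)$ as well, which concludes the proof.

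There is no real obstacle here: the lemma is a straightforward monotonicity property of the $\mathcal{P}$-correspondence relation in the parameter $\mathcal{P}$, packaged as a short unfolding of definitions. The only point worth noting is that although $\mathcal{P}$ is formally a multiset, the garbage condition is a pure membership check, so the distinction between set and multiset union is immaterial for this step.
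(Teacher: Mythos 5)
Your proof is correct and matches the paper's intent exactly: the paper dismisses this lemma as ``Straightforward'', and the straightforward argument is precisely the monotonicity-in-$\mathcal{P}$ unfolding you give (the structural conditions ignore $\mathcal{P}$, and membership of the garbage literals' negations is preserved under enlarging $\mathcal{P}$ to $\mathcal{P},l$). Nothing further is needed.
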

\begin{proof}
  Straightforward.
\end{proof}

\begin{theorem}
  Assume $\infer{\mathcal S}{\mathcal{S}_i}$ is a rule of \LKDPLLTh. For every \LKThp\ sequent $\mathcal{S'}$ that corresponds to $\mathcal{S}$, there exist a partial proof-tree in \LKThp\:
  \begin{itemize}
  \item whose open leaves $(\mathcal {S}'_i)$ are such that $\forall i$, $\mathcal{S}'_i$ corresponds to $\mathcal{S}_i$ and 
  \item  whose size is smaller than size $(\mathcal{S}') + 4$.
  \end{itemize}
  
\end{theorem}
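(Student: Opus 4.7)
The plan is to proceed by case analysis on the \LKDPLLTh\ rule producing $\mathcal{S}$ from its premises. Fix a \LKThp\ sequent $\mathcal{S}'=\DerNegTh{\Delta,C'_1,\ldots,C'_m}{}{}{\mathcal{P}}$ that corresponds to $\mathcal{S}$, and let $C'=l'_1\orN\cdots\orN l'_{p'}$ denote the formula $\mathcal{P}$-corresponding to the clause $C_m$ on which the simulated rule acts.

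The uniform technique I use for Empty, Assert, Resolve and Subsume is a \emph{focus-and-decompose manoeuvre} on $\non{C'}$: shift focus onto $\non{C'}=\non{l'_1}\andP\cdots\andP \non{l'_{p'}}$, fully decompose its $\andP$-connectives, and dispatch each resulting leaf $\DerPosTh{\Gam}{\non{l'_j}}{}{\mathcal{P}}$ depending on whether $l'_j$ is garbage (i.e.\ $\non{l'_j}\in\mathcal{P}$) or a genuine literal of $C_m$. Garbage leaves close immediately via the theory-axiom, since the correspondence invariant $\Delta\models_\mathcal{T} p$ for all $p\in\mathcal{P}$ guarantees $\Gam\models_\mathcal{T}\non{l'_j}$. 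Genuine literals $l$ of $C_m$ are shifted back into the left context by the \emph{``$N$ negative''} followed by \emph{``positive or atom''} rules, producing open leaves of the form $\DerNegTh{\Gam, l}{}{}{\mathcal{P}}$. This sub-derivation costs $O(p')$ inferences, comfortably bounded by $\size{C'}\le\size{\mathcal{S}'}$, leaving the constant slack of $4$ to be absorbed by wrapping steps (shift-to-focus, a polarization step, or an analytic cut).

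Instantiating the manoeuvre in each case: \textbf{Empty} ($C_m=\bot$) has no genuine literals, so every leaf is garbage and the tree is complete; \textbf{Assert} on the unit clause $l$ leaves a single open leaf $\DerNegTh{\Gam, l}{}{}{\mathcal{P}}$ which, after polarizing $l$ into $\mathcal{P}$ (licensed by $\Del, l\models_\mathcal{T} l$), corresponds to the premise $\DerDPLLTh{\Del, l}{\phi, l}$; \textbf{Resolve} uses the side-condition $\Del, l\models_\mathcal{T}$ (equivalent to $\Del\models\non l$) to polarize $\non l$ into $\mathcal{P}$ before running the manoeuvre, turning the occurrence of $l$ in $C'$ into garbage, so that $C'$ now $\mathcal{P}$-corresponds to $C$ alone, matching the premise $\DerDPLLTh{\Del}{\phi, C}$. \textbf{Split} is simulated separately by an analytic cut on $l$, applicable because $l\in\atm\phi$ places $l$ inside some $C'_i$ and hence in $\Gam$; the two branches $\DerNegTh{\Gam,l}{}{}{\mathcal{P}}$ and $\DerNegTh{\Gam,\non l}{}{}{\mathcal{P}}$ are left open and correspond, after the appropriate polarization, to the two \LKDPLLTh\ premises.

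The principal obstacle is \textbf{Subsume}, whose premise contains \emph{one fewer} clause-formula than its conclusion. Its simulation combines the focus-and-decompose manoeuvre with an analytic cut on $l$: the $\non l$-branch of the cut closes via the theory (since $\Del, \non l\models_\mathcal{T}$); on the $l$-branch the asserted $l$ then suffices, via the manoeuvre on $\non{C'}$, to dispatch every leaf so that the sole remaining open leaf matches $\DerDPLLTh{\Del}{\phi}$. Throughout the whole case analysis, a careful bookkeeping must ensure that the invariant $\Del'\models_\mathcal{T} p$ for every $p\in\mathcal{P}$ still holds at each open leaf; this will follow directly from the side-conditions of the simulated \LKDPLLTh\ rule.
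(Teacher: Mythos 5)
Your overall strategy coincides with the paper's: a case analysis on the \LKDPLLTh\ rule, an analytic cut for \emph{Split}, the general cut for the cut-rule, a single polarisation step for \emph{Resolve}, and, for \emph{Empty} and \emph{Assert}, exactly your focus-and-decompose manoeuvre on $\non{C'}$, closing the garbage conjuncts with the theory axiom (using the invariant $\Del\models_{\mathcal T}p$ for $p\in\mathcal P$) and shifting the genuine literal back into the context after polarising it. Those cases, and your accounting of the size bound, match the paper and are fine.

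The genuine gap is \emph{Subsume}, where your construction does not work. After the analytic cut on $l$, the $\non l$-branch indeed closes by $\Theory{\Gam,\non l}$; but on the $l$-branch the manoeuvre on $\non{C'}$ produces, for the genuine literal $l$ of the clause $l\vee C$, the leaf $\DerPosTh{\Gam,l}{\non l}{}{\mathcal P}$, and no rule closes it: the identity axiom would need $\non l\in\Gam$, and the theory axiom would need $\Theory{\Gam,l}$, i.e.\ that $\Del$ together with $l$ be $\mathcal T$-unsatisfiable, whereas the side-condition gives precisely $\Del\models_{\mathcal T}l$. Moreover each remaining literal of $C$ contributes its own open leaf, still carrying $C'$ in its context, so you do not get ``a sole remaining open leaf'', and none of the leaves you obtain corresponds to $\DerDPLLTh{\Del}{\phi}$: the definition of $\mathcal P$-correspondence pairs the clause-formulas of the \LKThp\ sequent bijectively with the clauses of the \LKDPLLTh\ sequent, and no inference of \LKThp\ ever removes $C'$ from the antecedent. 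The underlying difficulty is that \emph{Subsume} discards a clause that is \emph{satisfied} by $\Del$, which is a different kind of garbage from the one $\mathcal P$-correspondence supports (individual literals \emph{falsified} by $\Del$); repairing this requires relaxing the correspondence relation (e.g.\ tolerating extra clause-formulas that contain a literal of $\mathcal P$), not adding inference steps. For what it is worth, the paper itself gives no derivation for this case --- its \emph{Subsume} item stops after restating the hypothesis --- so you have correctly located where the real work lies, but the cut-based repair you sketch does not close it.
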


\begin{proof} By case analysis:
  \begin{itemize}
  \item Split: 
    \[  \infer[$ where $l\in\atm\phi,
    \Del, \non l \nvDash_\mathcal{T} 
    $ and $ \Del,l \nvDash_\mathcal{T}] 
    {\DerDPLLTh {\Del} {\phi}}
    {
      {\DerDPLLTh {\Del,\non l} {\phi}}  \quad 
      {\DerDPLLTh {\Del, l} {\phi}}
    }
    \]
    Assume that $\DerDPLLThD{\Delta,\phi'}$ corresponds to 
    $\DerDPLLTh{\Del} {\phi}$ (\ie\ $\phi'=C'_1,\ldots , C'_n$ and $\phi=C_1,\ldots , C_n$ with $C'_i$ $\mathcal{P}$-corresponding to $C_i$ for $i=1\ldots n$).

    We build in \LKThp\ the following derivation that uses an analytic cut:
    \[ 
    \infer{\DerNegTh {\Delta,\phi'} {}{} {\Delta_0}}
    { 
      {\DerNegTh {\Delta,\non l,\phi'} {}{} {\Delta_0}} 
      \quad 
      {\DerNegTh {\Delta,l,\phi'} {}{} {\Delta_0}}
    } 
    \]

    and 
    $\DerNegTh {\Delta,\non l,\phi'} {}{} {\Delta_0}$ $\mathcal{P}$-corresponds to ${\DerDPLLTh {\Del,\non l} {\phi}}$ and 
    $\DerNegTh {\Delta,l,\phi'} {}{} {\Delta,_0}$ $\mathcal{P}$-corresponds to ${\DerDPLLTh{\Del, l} {\phi}}$.
    
  \item Assert:

    \[
    \infer[\Del, \non l \nvDash_\mathcal{T} $ and $ \Del,l \nvDash_\mathcal{T}]
    {\DerDPLLTh {\Del} {\phi, l }} {\DerDPLLTh {\Del,l} {\phi, l }}
    \]
    Assume that $\DerNeg {\Delta,\phi',C'} {}{} {\mathcal{P}}$ corresponds to 
    $\DerDPLLTh {\Del} {\phi,l }$. (\ie $\phi'=C'_1,\ldots , C'_n$ and $\phi=C_1,\ldots , C_n$ with $C'_i$ $\mathcal{P}$-corresponding to $C_i$ for $i=1\ldots n$, and $C'$ $\mathcal{P}$-corresponds to $l$, that is to say $C'=\vee^p_{i=1}l_i$  where $l=l_{i_0}$ for some $i_0 \in 1\ldots n$)

    We build in \LKp\ the following derivation:
    \[ 
    \infer{\DerNegTh {\Delta,\phi',C'} {}{} {\mathcal{P}} }
    { \infer{\DerNegTh {\Delta,\phi',C'} {}{} {\mathcal{P}, l_{i_0}} }
      {
        \prooftree       
        { \infer[i\neq i_0]{\DerPosTh {\Delta,\phi',C'} {\non l_{i}} {} {\mathcal{P},  l_{i_0} } }{\Theory{\Delta,\phi',C', l_{i}}}			\quad 
          \infer{\DerPosTh {\Delta,\phi',C'} { \non {l_{i_0}} } {} {\mathcal{P},  l_{i_0} } }
          { \infer {\DerNegTh{\Delta,\phi',C'}  {\non {l_{i_0}} } {} {\mathcal{P},  l_{i_0} } }
            {\DerNegTh {l_{i_0},\Delta,\phi',C'} {} {} {\mathcal{P},  l_{i_0} } } 
          }
        }
        \using \andP
        \proofdotseparation=1.2ex
        \proofdotnumber=4             .
        \leadsto
        \DerPosTh {\Delta,\phi',C'} {\non {C'}}{} {\mathcal{P}, l_{i_0}  }
        \endprooftree
      } }
    \]
    
    For $i\neq i_0$, $\non l_{i}\in \Delta_0$, so it is positive and we can use an axiom (remember that $\Delta\models\non l_{i}$).

  \item Empty$_\mathcal{T}$:

    \[
    \infer{\DerDPLLTh {\Del} {\phi,\bot }}  {} 
    \]
    Assume that $\DerNeg {\Delta,\phi',C'} {}{} {\mathcal{P}}$ corresponds to 
    $\DerDPLLTh {\Del} {\phi,\bot}$ (\ie $C'$ $\mathcal{P}$-corresponds to $\bot$, $\phi'=C'_1,\ldots , C'_n$ and $\phi=C_1,\ldots ,C_n$ with $C'_i$, $\mathcal{P}$-corresponding to $C_i$ for $i=1\ldots n$).

    We build in \LKp\ the following derivation:

    \[ 
    \infer{\DerNegTh {\Delta,\phi',C'} {}{} {\mathcal{P}}}
    {
      \prooftree       
      \infer{\DerPosTh {\Delta,\phi',C'} {\non l_i}{} {\Delta_0}} {\Theory{\Delta,\phi',C', l_{i}}}
      \using \andP
      \proofdotseparation=1.2ex
      \proofdotnumber=4             .
      \leadsto
      \DerPosTh {\Delta,\phi',C'} {\non {C'}}{} {\mathcal{P}, l_{i_0}  }
      \endprooftree
    } 
    \]

    Again, $\non l_{i}\in \Delta_0$, so it is positive and we can use an axiom (remember that $\Delta\models\non l_{i}$).

    %
  \item Resolve:
    \[ 
    \infer[ \Delta, l \models _\mathcal{T}]{\DerDPLLTh {\Del} { \phi, l \vee C}}
    {\DerDPLLTh {\Del}{ \phi,C} } \] 

    Assume that $\DerNeg {\Delta,\phi',C'} {}{} {\mathcal{P}}$ corresponds to 
    $\DerDPLLTh {\Del} { \phi, l \vee C}$ (\ie\ $C'$ $\mathcal{P}$-corresponds to $ l\vee C$, $\phi'=C'_1,\ldots , C'_n$ and $\phi=C_1,\ldots ,C_n$ with $C'_i$ $\mathcal{P}$-corresponding to $C_i$ for $i=1\ldots n$).
    We build in \LKThp\ the following derivation
    \[ 
    \infer[ pol]{\DerNegTh {\Delta,\phi',C'} {}{} {\mathcal{P}}} 
    {\DerNegTh {\Delta,\phi',C'} {}{} {\mathcal{P}, \non l} } \]

    It suffices to notice that $\DerNeg {\Delta,\phi',C'} {}{} {\mathcal{P},\non l}$ corresponds to $\DerDPLLTh {\Del}{ \phi,C}$. 

  \item Subsume: 
    \[ 
    \infer[ \Delta, \non l \models _\mathcal{T}]{\DerDPLLTh {\Del} { \phi, l \vee C}}
    {\DerDPLLTh {\Del}{ \phi} } \] 
    Assume that $\DerNeg {\Delta,\phi',C'} {}{} {\mathcal{P}}$ corresponds to $\DerDPLLTh {\Del} { \phi, l \vee C}$ (\ie\ $C'$ $\mathcal{P}$-corresponds to $ l\vee C$, $\phi'=C'_1,\ldots , C'_n$ and $\phi=C_1,\ldots ,C_n$ with $C'_i$ $\mathcal{P}$-corresponding to $C_i$ for $i=1\ldots n$). 

    %
    %
  \item Cut: If we want to simulate \DPLLTh\ with backjump, we need to encode the cut rule of \LKDPLLc.
    \[
    \infer[C = \non l_1\vee\ldots\vee\non l_n]{\DerDPLL {\Del} {\phi}}
    {{\DerDPLLTh {\Del} {\phi, l_1,\ldots,l_n}}  \quad 
      {\DerDPLLTh{\Del} {\phi,C}}}
    \]	
    Assume that $\DerNegTh {\Delta,\phi'} {}{} {\mathcal{P}}$ corresponds to 
    $\DerDPLLTh {\Del} { \phi}$ (\ie\ $\phi'=C'_1,\ldots , C'_n$ and $\phi=C_1,\ldots ,C_n$ with $C'_i$ $\mathcal{P}$-corresponding to $C_i$ for $i=1\ldots n$).

    We build in \LKThp\ the following derivation that uses a general cut:

    \[ 
    \infer[cut]{\DerNeg {\Delta,\phi'} {}{} {\mathcal{P}}}
    {
      \DerNeg {\Delta,\phi', l_1,\ldots,l_n} {}{} {\mathcal{P}}
      \quad
      \DerNeg {\Delta,\phi', (\non {l_1}\orN\ldots\orN \non {l_n})} {}{} {\mathcal{P}}
    } 
    \]
    Clearly, $\DerNegTh {\Delta,\phi', l_1,\ldots,l_n} {}{} {\mathcal{P}}$ corresponds to ${\DerDPLLTh {\Del} {\phi, l_1,\ldots,l_n}}$ and 
    $\DerNegTh {\Delta,\phi', (\non {l_1}\orN\ldots\orN \non {l_n})} {}{} {\mathcal{P}}$ corresponds to ${\DerDPLLTh{\Del} {\phi,C}}$.
  \end{itemize}
\end{proof}


\bibliographystyle{Common/good}
\bibliography{Common/abbrev-short,Common/Main,Common/crossrefs}

\begin{thebibliography}{NOT06}

\bibitem[NOT06]{Nieuwenhuis06}
\authorformat{R.~Nieuwenhuis, A.~Oliveras, and C.~Tinelli}.
\newblock Solving {SAT} and {SAT Modulo Theories}: From an abstract
  {Davis}--{Putnam}--{Logemann}--{Loveland} procedure to {DPLL({\it T})}.
\newblock \emphasizeformat{J. of the ACM Press}, 53(6):937--977, 2006.%
\hfill

\bibitem[Tin02]{Tin-JELIA-02}
\authorformat{C.~Tinelli}.
\newblock A {DPLL}-based calculus for ground satisfiability modulo theories.
\newblock In G.~Ianni and S.~Flesca, editors, \emphasizeformat{Proc.\ of the\
  8th European Conf.\ on Logics in Artificial Intelligence}, volume 2424 of
  \emphasizeformat{LNAI}, pages 308--319. Springer-Verlag, 2002.%
\hfill

\end{thebibliography}

\end{document}